\newtheorem{theorem}{Theorem}
\newtheorem{definition}{Definition}
\newtheorem{corollary}{Corollary}
\newtheorem{lemma}{Lemma}
\def\eop{{\hfill\vbox{\hrule height .3pt
      \hbox{\vrule width.3pt height 7pt
      \kern 7pt
      \vrule width .3pt}
      \hrule height .3pt}} \par\bigskip}
\begin{document}

\title{Robust recovery of complex exponential signals from random Gaussian projections via low rank Hankel matrix reconstruction}
\date{}
\author{
Jian-Feng Cai\thanks{Department of Mathematics, University of Iowa, Iowa City, IA 52242. Email: \texttt{\{jianfeng-cai,guibo-ye\}@uiowa.edu}}
\and
Xiaobo Qu\thanks{Department of Electronic Science, Fujian Provincial Key Laboratory of Plasma and Magnetic Resonance, State Key Laboratory of Physical Chemistry of Solid Surfaces, Xiamen University, P.O. Box 979, Xiamen 361005 (China). Email: \texttt{quxiaobo@xmu.edu.cn}}
\and
Weiyu Xu\thanks{Department of Electrical and Computer Engineering, University of Iowa, Iowa City, IA 52242. Email: \texttt{weiyu-xu@uiowa.edu}}
\and
Gui-Bo Ye$^*$
}

\maketitle

\begin{abstract}
This paper explores robust recovery of a superposition of $R$ distinct complex exponential functions from a few random Gaussian projections. We assume that the signal of interest is of $2N-1$ dimensional and $R<<2N-1$. This framework covers a large class of signals arising from real applications in biology, automation, imaging science, etc. To reconstruct such a signal, our algorithm is to seek a low-rank Hankel matrix of the signal by minimizing its nuclear norm subject to the consistency on the sampled data. Our theoretical results show that a robust recovery is possible as long as the number of projections exceeds $O(R\ln^2N)$. No incoherence or separation condition is required in our proof. Our method can be applied to spectral compressed sensing where the signal of interest is a superposition of $R$ complex sinusoids. Compared to existing results,  our result here does not need any separation condition on the frequencies, while achieving better or comparable bounds on the number of measurements. Furthermore, our method provides theoretical guidance on how many samples are required in the state-of-the-art non-uniform sampling in NMR spectroscopy. The performance of our algorithm is further demonstrated by numerical experiments.
\end{abstract}

\section{Introduction}

Many practical problems involve signals that can be modeled or approximated by a superposition of a few complex exponential functions. In particular, if we choose the exponential function to be complex sinusoid, it covers signals in acceleration of medical imaging \cite{LDP:MRM:07}, analog-to-digital conversion \cite{TLD:TIT:10}, inverse scattering in seismic imaging \cite{BPTP:IP:02}, etc.  Time domain signals in nuclear magnetic resonance (NMR) spectroscopy, that are widely used to analyze the compounds in chemistry and protein structures in biology, are another type of signals that can be modeled or approximated by a superposition of complex exponential functions \cite{QMCC:ACIE:15}. How to recover those superposition of complex exponential functions is of primary importance in those applications.

In this paper, we will consider how to recover those complex exponentials from linear measurements of their superposition. More specifically, let $\hat{\bm{x}}\in\mathbb{C}^{2N-1}$ be a vector satisfying
\begin{equation}\label{eq:hatx}
\hat{x}_j=\sum_{k=1}^{R}c_k z_k^j, \qquad j=0,1,\ldots,2N-2,
\end{equation}
where $z_k\in\mathbb{C}$, $k=1,\ldots,R$, are some unknown complex numbers. In other words, $\hat{\bm{x}}$ is a superposition of $R$ exponential functions. We assume $R\ll 2N-1$.  When $|z_k|=1$, $k=1,\ldots,R$, $\hat{\bm{x}}$ is a superposition of complex sinusoids. When $z_k=e^{-\tau_k}e^{2\pi\imath f_k}$, $k=1,\ldots,R$, $\hat{\bm{x}}$ models the signal in NMR spectroscopy.

Since $R\ll 2N-1$, the degree of freedom to determine $\hat{\bm{x}}$ is much less than the ambient dimension $2N-1$. Therefore, it is possible to recover $\hat{\bm{x}}$ from its under-sampling \cite{CLMW:JACM:11,CP:ProcIEEE:09,CRT:TIT:06,Don:TIT:06}.
In particular, we consider to recover $\hat{\bm{x}}$ from its linear measurement
\begin{equation}\label{eq:linmea}
\bm{b}=\mathcal{A}\hat{\bm{x}},
\end{equation}
where $\mathcal{A}\in\mathbb{C}^{M\times (2N-1)}$ with $M\ll 2N-1$.

We will use a Hankel structure to reconstruct the signal of interest $\hat{\bm{x}}$. The Hankel structure originates from the matrix pencil method \cite{HS:TASSP:90} for harmonic retrieval for complex sinusoid. The conventional matrix pencil method assumes fully observed $\hat{\bm{x}}$ as well as the model order $R$, which are both unknown here. Following the ideas of the matrix pencil method in \cite{HS:TASSP:90} and enhanced matrix completion (EMaC) in \cite{CC:TIT:14}, we construct a Hankel matrix based on signal $\hat{\bm{x}}$. \ More specifically, define the Hankel matrix $\hat{\bm{H}}\in\mathbb{C}^{N\times N}$ by
\begin{equation}\label{eq:Hankel}
\hat{H}_{jk}=\hat{x}_{j+k},\qquad j,k=0,1,\ldots,N-1.
\end{equation}
Throughout this paper, indices of all vectors and matrices start from $0$, instead of $1$ in conventional notations. It can be shown that $\hat{\bm{H}}$ is a matrix with rank $R$. Instead of reconstructing $\hat{\bm{x}}$ directly, we reconstruct the rank-$R$ Hankel matrix $\hat{\bm{H}}$, subject to the constraint that \eqref{eq:linmea} is satisfied.

Low rank matrix recovery has been widely studied  \cite{CCS:SIOPT:10,CP:ProcIEEE:09,CR:FoCM:09,RFP:SIREV:10}. It is well known that minimizing the nuclear norm tends to lead to a solution of low-rank matrices. Therefore,  a nuclear norm minimization problem subject to the constraint \eqref{eq:linmea} is proposed. More specifically,
for any given $\bm{x}\in{\mathbb{C}^{2N-1}}$, let $\bm{H}(\bm{x})\in\mathbb{C}^{N\times N}$ be the Hankel matrix whose first row and last column is $\bm{x}$, i.e., $[\bm{H}(\bm{x})]_{jk}=x_{j+k}$. We propose to solve
\begin{equation}\label{eq:min}
\min_{\bm{x}}\|\bm{H}(\bm{x})\|_*,\qquad\mbox{subject to}\quad \mathcal{A}\bm{x}=\bm{b},
\end{equation}
where $\|\cdot\|_*$ is the nuclear norm function (the sum of all singular values), and $\mathcal{A}$ and $\bm{b}$ are from the linear measurement \eqref{eq:linmea}. When there is noise contained in the observation, i.e.,
$$
\bm{b}=\mathcal{A}\hat{\bm{x}}+\bm{\eta},
$$
we solve
\begin{equation}\label{eq:minnoise}
\min_{\bm{x}}\|\bm{H}(\bm{x})\|_*,\qquad\mbox{subject to}\quad \|\mathcal{A}\bm{x}-\bm{b}\|_2\leq\delta,
\end{equation}
where $\delta=\|\bm{\eta}\|_2$ is the noise level.

An important theoretical question is how many measurements are required to get a robust reconstruction of $\hat{\bm{H}}$ via \eqref{eq:min} or \eqref{eq:minnoise}. For a generic unstructured $N\times N$ matrix of rank $R$, standard theory \cite{CR:FoCM:09,CT:TIT:10,CRPW:FoCM:12,RFP:SIREV:10} indicates that $O(NR\cdot poly(\log N))$ measurements are needed for a robust reconstruction by nuclear norm minimization. This result, however, is unacceptable here since the number of parameters of $\hat{\bm{H}}$ is only $2N-1$. The main contribution of this paper is then to prove that \eqref{eq:min} and \eqref{eq:minnoise} give a robust recovery of $\hat{\bm{H}}$ (hence $\hat{\bm{x}}$) as soon as the number of projections exceeds $O(R\ln^2N)$ if we choose the linear operator $\mathcal{A}$ to be some scaled random Gaussian projections. This result is further extended to the robust reconstruction of low-rank Hankel or Toeplitz matrices from its few Gaussian random projections.

Our result can be applied to various signals of superposition of complex exponentials, including, but not limited to, signals of complex sinusoids and signals in accelerated NMR spectroscopy. When applied to complex sinusoids, our result here does not need any separation condition on the frequencies, while achieving better or comparable bounds on the number of required measurements.
Furthermore, 
our theoretical result provides some guidance on how many samples to choose for the model proposed in \cite{QMCC:ACIE:15} to recover NMR spectroscopy.

\begin{itemize}
\item {\bf Complex sinusoids.} When $|z_k|=1$ for $k=1,\ldots,R$, we must have $z_k=e^{2\pi \imath f_k}$ for some frequency $f_k$. In this case, $\hat{\bm{x}}$ is a superposition of complex sinusoids, for examples, in the analog-to-digital conversion of radio signals \cite{TLD:TIT:10}. The problem on recovering  $\hat{\bm{x}}$ from its as few as possible linear measurements \eqref{eq:linmea} may be solved using compressed sensing (CS)\cite{CRT:TIT:06}. One can discretize the domain of frequencies $f_k$ by a uniform grid. When the frequencies $f_k$ indeed fall on the grid, $\hat{\bm{x}}$ is sparse in the discrete Fourier transform domain, and CS theory \cite{CRT:TIT:06,Don:TIT:06} suggests that it is possible to reconstruct $\hat{\bm{x}}$ from its very few samples via $\ell_1$-norm minimization, provided that $R\ll 2N-1$. Nevertheless, the frequencies $f_k$ in our setting usually do not exactly fall on a grid. The basis mismatch between the true parameters and the grid based on discretization degenerates the performance of conventional compressed sensing \cite{CLPCR:SP:11}.

    To overcome this, \cite{CF:CPAM:12,TBSR:TIT:13} proposed to recover off-the-grid complex sinusoid frequencies using total variation minimization or atomic norm \cite{CRPW:FoCM:12} minimization. They proved that the total variation minimization or atomic norm minimization can have a robust reconstruction of $\hat{\bm{x}}$ from a nonuniform sampling of very few entries of $\hat{\bm{x}}$, provided that the frequencies $f_k$, $k=1,\ldots,R$, has a good separation. Another method for recovering off-the-grid frequencies is enhanced matrix completion (EMaC) proposed by Chen et al \cite{CC:TIT:14}, where the Hankel structure plays a central role similar to our model. The main result in \cite{CC:TIT:14} is that the complex sinusoids $\bm{\hat{x}}$ can be robustly reconstructed via EMaC from its very few nonuniformly sampled entries. Again, the EMaC requires a separation of the frequencies, described implicitly by an incoherence condition.

    When applied to complex sinusoids, compared to the aforementioned existing results, our result here does not need any separation condition on the frequencies, while achieving better or comparable bound of number of measurements.

\item {\bf Accelerated NMR spectroscopy.} When $z_k=e^{-\tau_k}e^{2\pi\imath f_k}$, $k=1,\ldots,R$, $\hat{\bm{x}}$ models the signal in NMR spectroscopy, which arises frequently in studying short-lived molecular systems, monitoring chemical reactions in real-time, high-throughput applications, etc. Recently, Qu et al \cite{QMCC:ACIE:15} proposed an algorithm based on low rank Hankel matrix. In this specific application, $\mathcal{A}$ is a matrix that denotes the under-sampling of NMR signals in the time domain. Numerical results show its efficiency in \cite{QMCC:ACIE:15} while theoretical results are still needed to explain. It is vital to give some theoretical results on this model since it will give us some guidance on how many samples should be chosen to guarantee the robust recovery. Though the result in \cite{CC:TIT:14} applies to this problem, it needs an incoherence condition, which remains uncertain for diverse chemical and biology samples. Our result in this paper does not require any incoherence condition. Moreover, our bound is better than that in \cite{CC:TIT:14}.
\end{itemize}

The rest of this paper is organized as follows. We begin with our model and our main results in Section \ref{secMMR}. Proofs for the main result are given in Section \ref{secProof}. Then, in Section \ref{secMatrices}, we  extend the main result to the reconstruction of generic low-rank Hankel or Toeplitz matrices. Finally, the performance of our algorithm is demonstrated by numerical experiments in Section \ref{secNum}.

\section{Model and Main Results}\label{secMMR}
Our approach is based on the observation that the Hankel matrix whose first row and last column consist of entries of $\hat{\bm{x}}$ has rank $R$.  Let $\hat{\bm{H}}$ be the Hankel matrix defined by \eqref{eq:Hankel}. Eq. \eqref{eq:hatx} leads to a decomposition
$$
\hat{\bm{H}}=
\left[
\begin{matrix}
1&\ldots&1\cr
z_1&\ldots&z_R\cr
\vdots&\vdots&\vdots\cr
z_1^{N-1}&\ldots&z_R^{N-1}\cr
\end{matrix}
\right]
\left[\begin{matrix}
c_1\cr &\ddots\cr&&c_R
\end{matrix}
\right]
\left[
\begin{matrix}
1&z_1\ldots&z_1^{N-1}\cr
\vdots&\vdots&\vdots\cr
1&z_R\ldots&z_R^{N-1}\cr
\end{matrix}
\right]
$$
Therefore, the rank of $\hat{\bm{H}}$ is $R$. Similar to Enhanced Matrix Completion (EMaC) in \cite{CC:TIT:14}, in order to reconstruct $\hat{\bm{x}}$, we first reconstruct the rank-$R$ Hankel matrix $\hat{\bm{H}}$, subject to the constraint that \eqref{eq:linmea} is satisfied. Then, $\hat{\bm{x}}$ is derived directly by choosing the  first row and last column of $\hat{\bm{H}}$. More specifically,
for any given $\bm{x}\in{\mathbb{C}^{2N-1}}$, let $\bm{H}(\bm{x})\in\mathbb{C}^{N\times N}$ be the Hankel matrix whose first row and last column is $\bm{x}$, i.e., $[\bm{H}(\bm{x})]_{jk}=x_{j+k}$. We propose to solve
\begin{equation}\label{eq:minrank}
\min_{\bm{x}}\mathrm{rank}(\bm{H}(\bm{x})),\qquad\mbox{subject to}\quad \mathcal{A}\bm{x}=\bm{b},
\end{equation}
where $\mathrm{rank}(\bm{H}(\bm{x}))$ denotes the rank of $\bm{H}(\bm{x})$, and $\mathcal{A}$ and $\bm{b}$ are from the linear measurement \eqref{eq:linmea}. When there is noise contained in the observation, i.e, $\bm{b}=\mathcal{A}\hat{\bm{x}}+\eta$, we correspondingly solve

\begin{equation}\label{eq:minranknoise}
\min_{\bm{x}}\mathrm{rank}(\bm{H}(\bm{x})),\qquad\mbox{subject to}\quad \|\mathcal{A}\bm{x}-\bm{b}\|_2\leq \delta,
\end{equation}
where $\delta=\|\eta\|_2$ is the noise level.

These two problems are all NP hard problems and not easy to solve. Following the ideas of matrix completion and low rank matrix recovery \cite{CR:FoCM:09,CT:TIT:10,CRPW:FoCM:12,RFP:SIREV:10}, it is possible to exactly recover the low rank Hankel matrix via nuclear norm minimization. Therefore, it is reasonable to use nuclear norm minimization for our problem and it leads to the models in \eqref{eq:min} and \eqref{eq:minnoise}.

Intuitively, our model is reasonable and likely to work. Theoretical results are desirable to guarantee it. The results in \cite{CR:FoCM:09,CT:TIT:10,CRPW:FoCM:12,RFP:SIREV:10} do not consider the Hankel structure. For generic $N\times N$ rank-$R$ matrix, they requires $O(NR\cdot poly(\log N))$ measurements for robust recovery which is too much since there are only $2N-1$ degrees of freedom in $\bm{H}(\bm{x})$. The theorems proposed in \cite{TBSR:TIT:13} work only for a special case where signals of interest are superpositions of complex sinusoids, which excludes, e.g., the signals in NMR spectroscopy. While the results from \cite{CC:TIT:14} extend to complex exponentials, the performance guarantees in \cite{TBSR:TIT:13,CC:TIT:14,CF:CPAM:12} require incoherence conditions, implying the knowledge of frequency interval in spectroscopy, which are not available before the realistic sampling of diverse chemical or biological samples. This limits the applicability of these theories.

It is challenging to provide a theorem guaranteeing the exact recovery for model \eqref{eq:min} with arbitrarily linear measurements $\mathcal{A}$. In this paper, we provide a theoretical result ensuring exact recovery when $\mathcal{A}$ is a scaled random Gaussian matrix. Our result does not assume any incoherence conditions on the original signal.

\begin{theorem}\label{thm:main}
Let $\mathcal{A}=\mathcal{B}\mathcal{D}\in\mathbb{C}^{M\times (2N-1)}$, where $\mathcal{B}\in\mathbb{C}^{M\times (2N-1)}$ is a random matrix whose real and imaginary parts are i.i.d. Gaussian with mean $0$ and variance $1$, $\mathcal{D}\in\mathbb{R}^{(2N-1)\times (2N-1)}$ is a diagonal matrix with the $j$-th diagonal $\sqrt{j+1}$ if $j\leq N-1$ and $\sqrt{2N-1-j}$ otherwise. Then, there exists a universal constant  $C_1>0$ such that, for an arbitrary $\epsilon>0$,  If
$$
M \geq (C_1\sqrt{R}\ln N+\sqrt{2}\epsilon)^2+1,
$$
then, with probability at least $1-2e^{-\frac{M-1}{8}}$, we have
\begin{enumerate}
\item[(a)]
$\tilde{\bm{x}}=\hat{\bm{x}}$, where $\tilde{\bm{x}}$ is the unique solution of \eqref{eq:min} with $\bm{b}=\mathcal{A}\hat{\bm{x}}$;
\item[(b)]
$\|\mathcal{D}(\tilde{\bm{x}}-\hat{\bm{x}})\|_2\leq 2\delta/\epsilon$, where $\tilde{\bm{x}}$ is the unique solution of \eqref{eq:minnoise} with $\|\bm{b}-\mathcal{A}\hat{\bm{x}}\|_2\leq\delta$.
\end{enumerate}
\end{theorem}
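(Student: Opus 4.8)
The plan is to recast \eqref{eq:min}--\eqref{eq:minnoise} as low-rank matrix recovery of $\hat{\bm H}$ from a genuine complex-Gaussian measurement ensemble \emph{confined to the subspace of Hankel matrices}, and then to control recovery by a Gordon-type (escape-through-a-mesh) estimate whose only nonstandard ingredient is a bound on the Gaussian width of the descent cone of $\|\cdot\|_*$ at $\hat{\bm H}$ intersected with that subspace. The first observation is that the chosen weights are exactly the Hankel multiplicities: a direct count shows the entry $x_j$ occurs precisely $D_j^2$ times in $\bm H(\bm x)$, where $D_j$ is the $j$-th diagonal of $\mathcal D$, so $\|\bm H(\bm x)\|_F=\|\mathcal D\bm x\|_2$ for all $\bm x$. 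Consequently $\bm x\mapsto\bm H(\mathcal D^{-1}\bm x)$ is an isometry of $\mathbb C^{2N-1}$ onto the space $\mathcal S\subset\mathbb C^{N\times N}$ of Hankel matrices, with $\{\bm e_i:=\bm H(\bm\delta_i)/D_i\}_{i=0}^{2N-2}$ an orthonormal basis of $\mathcal S$. Writing $\bm h=\bm x-\hat{\bm x}$, the constraint $\mathcal A\bm x=\mathcal A\hat{\bm x}$ is $\mathcal B\mathcal D\bm h=\bm 0$, and $[\mathcal B\mathcal D\bm h]_m=\sum_i\mathcal B_{mi}\langle\bm H(\bm h),\bm e_i\rangle$; hence, identifying $\mathcal S\cong\mathbb C^{2N-1}$ through this basis, the operator $\Phi:\bm Z\mapsto\big(\langle\bm Z,\sum_i\overline{\mathcal B_{mi}}\bm e_i\rangle\big)_{m=1}^M$ acting on $\mathcal S$ is precisely a complex-Gaussian measurement operator, and \eqref{eq:min}/\eqref{eq:minnoise} ask to recover the rank-$R$ matrix $\hat{\bm H}\in\mathcal S$ from $\Phi\hat{\bm H}$, resp.\ a $\delta$-perturbation thereof, by nuclear-norm minimization over $\mathcal S$.

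Second, I would introduce the descent cone $\mathcal C$ of $\|\cdot\|_*$ at $\hat{\bm H}$ and the minimum conic singular value $\kappa:=\inf\{\|\Phi\bm Z\|_2:\bm Z\in\mathcal C\cap\mathcal S,\ \|\bm Z\|_F=1\}$. The standard descent-cone argument then gives both claims at once: feasibility of $\hat{\bm x}$ and optimality of $\tilde{\bm x}$ force $\bm H(\tilde{\bm x})-\hat{\bm H}\in\mathcal C\cap\mathcal S$ with $\|\Phi(\bm H(\tilde{\bm x})-\hat{\bm H})\|_2=\|\mathcal A(\tilde{\bm x}-\hat{\bm x})\|_2\le2\delta$, whence $\|\mathcal D(\tilde{\bm x}-\hat{\bm x})\|_2=\|\bm H(\tilde{\bm x})-\hat{\bm H}\|_F\le2\delta/\kappa$; part (b) follows with $\kappa\ge\epsilon$, and part (a) is the case $\delta=0$ (then $\mathcal D(\tilde{\bm x}-\hat{\bm x})=\bm 0$, so $\tilde{\bm x}=\hat{\bm x}$, and the same computation gives uniqueness). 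Since $\Phi$ is Gaussian, Gordon's comparison inequality applied to the spherical section $\mathcal C\cap\mathcal S\cap S$ yields $\kappa\ge\epsilon$ with probability at least $1-2e^{-(M-1)/8}$ once $M\ge(w+\sqrt2\,\epsilon)^2+1$, where $w:=\mathbb E\,\sup_{\bm Z\in\mathcal C\cap\mathcal S,\ \|\bm Z\|_F=1}\mathrm{Re}\langle\bm G,\bm Z\rangle$ and $\bm G$ is a standard Gaussian element of $\mathcal S$; the factor $\sqrt2$ and the exponent $(M-1)/8$ come from the passage between the complex $M$-dimensional and real $2M$-dimensional pictures and from the Gaussian concentration used in Gordon's bound. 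It remains to prove $w\le C_1\sqrt R\,\ln N$.

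Third, the width bound. Since $\hat{\bm H}$ has rank $R$, its tangent space $T$ satisfies $\mathrm{rank}(P_T\bm Z)\le 2R$, and on $\mathcal C$ one has $\|P_{T^\perp}\bm Z\|_*\le\|P_T\bm Z\|_*$; hence $\|\bm Z\|_*\le2\|P_T\bm Z\|_*\le2\sqrt{2R}\,\|\bm Z\|_F$ for $\bm Z\in\mathcal C$, so $w\le2\sqrt{2R}\,\mathbb E\|\bm G\|$ with $\|\cdot\|$ the operator norm. Write $\bm G=\bm H(\mathcal D^{-1}\bm g)$ with $\bm g$ standard Gaussian on $\mathbb C^{2N-1}$. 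Embedding the $N\times N$ Hankel matrix $\bm H(\bm v)$ into the $(2N-1)\times(2N-1)$ matrix $\bm M$ with $\bm M_{jk}=v_{(j+k)\bmod(2N-1)}$, no index wraps for $0\le j,k\le N-1$, so $\bm H(\bm v)$ is a submatrix of $\bm M$; moreover $\bm M=\bm C\bm J$ with $\bm C$ circulant and $\bm J$ a permutation, so $\|\bm H(\bm v)\|\le\|\bm M\|=\|\mathcal F\bm v\|_\infty$, the largest modulus of the $(2N-1)$-point DFT of $\bm v$. For $\bm v=\mathcal D^{-1}\bm g$ each $(\mathcal F\bm v)_\ell$ is a mean-zero complex Gaussian with variance $\sum_i D_i^{-2}=O(\ln N)$, so the maximal inequality for (possibly dependent) Gaussians gives $\mathbb E\|\bm H(\mathcal D^{-1}\bm g)\|\le\mathbb E\max_\ell|(\mathcal F\bm v)_\ell|=O(\ln N)$, hence $w\le C_1\sqrt R\,\ln N$; the Lipschitz-$1$ dependence of $\|\bm H(\mathcal D^{-1}\bm g)\|$ on $\bm g$ (via $\|\bm H(\mathcal D^{-1}\cdot)\|_F=\|\cdot\|_2$) supplies the high-probability version required in the previous step.

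I expect the main obstacle to be this last step: one has to notice that the prescribed weights $D_j^2$ are exactly the Hankel multiplicities --- which simultaneously makes $\|\bm H(\bm x)\|_F=\|\mathcal D\bm x\|_2$, enabling the clean reduction, and makes $\sum_j D_j^{-2}=O(\ln N)$ --- and then combine this with the circulant embedding so that the operator norm of the \emph{weighted} Gaussian Hankel matrix is $O(\ln N)$ rather than the $\Theta(\sqrt N)$ a naive $\varepsilon$-net over the sphere would deliver; everything else is bookkeeping with Gordon's theorem and standard facts about descent cones of the nuclear norm.
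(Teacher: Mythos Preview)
Your proposal is correct, and the overall architecture---the change of variables $\bm y=\mathcal D\bm x$ turning $\mathcal A$ into a genuine complex Gaussian, the descent-cone reduction (your $\kappa\ge\epsilon$ is the paper's Lemma~1), Gordon's escape-through-a-mesh (paper's Lemma~2), and the reduction of the Gaussian width to $\sqrt R$ times the expected operator norm of a Gaussian weighted Hankel matrix---matches the paper's Section~3 step for step. Your trace-duality argument $\langle\bm G,\bm Z\rangle\le\|\bm G\|\,\|\bm Z\|_*\le2\sqrt{2R}\,\|\bm G\|\,\|\bm Z\|_F$ on the descent cone is a slightly shorter route to the same conclusion the paper obtains in Lemma~3 via the polar-cone/subdifferential computation, but the two are essentially equivalent.

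The one genuinely different ingredient is how you bound $\mathbb E\|\mathcal G\bm g\|_2$. The paper (Theorem~3) uses a moment method: it expands $\mathbb E\,\mathrm{tr}((\mathcal G\bm g)^k)$ as a sum over index sequences, groups them by a graph-labeling scheme into equivalence classes, bounds the cardinality of each class combinatorially, and optimizes over $k\approx\ln N$. Your circulant-embedding argument is considerably more elementary: since the $N\times N$ Hankel $\bm H(\bm v)$ is the top-left block of the $(2N{-}1)\times(2N{-}1)$ anti-circulant $\bm M$ with $M_{jk}=v_{(j+k)\bmod(2N-1)}$, and anti-circulants have singular values $|(\mathcal F\bm v)_\ell|$, one gets $\|\mathcal G\bm g\|_2\le\|\mathcal F\mathcal D^{-1}\bm g\|_\infty$; each coordinate $(\mathcal F\mathcal D^{-1}\bm g)_\ell$ is Gaussian with variance $\sum_j D_j^{-2}=2H_N-1=O(\ln N)$, and the maximal inequality over $2N-1$ such variables gives another factor $\sqrt{\ln N}$, yielding the same $O(\ln N)$ bound. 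This buys a much shorter and more transparent proof of the key estimate; the paper's moment method, on the other hand, is self-contained in the sense that it never leaves the Hankel structure and gives explicit (if messy) constants.
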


The number of measurements required is $O(R\ln^2N)$, which is reasonable small compared with the number of parameters in $\bm{H}(\bm{x})$. Furthermore, there is a parameter $\epsilon$ in Theorem \ref{thm:main}. For the noise-free case (a), the best choice of $\epsilon$ is obviously a number that is very close to $0$. For the noisy case (b), we can balance the error bound and the number of measurements to get an optimal $\epsilon$. On the one hand, according to the result in (b), in order to make the error in noisy case as small as possible, we would like $\epsilon$ to be as large as possible. On the other hand, we would like to keep the measurements $M$ of the order of $R\ln^2N$. Therefore, a seemingly optimal choice of $\epsilon$ is $\epsilon=O(R\ln^2N)$. With this choice of $\epsilon$, the number of measurements $M=O(R\ln^2N)$ and the error $\|\mathcal{D}(\tilde{\bm{x}}-\hat{\bm{x}})\|_2\leq O\left(\frac{\delta}{\sqrt{M}}\right)$.

\section{Proof of Theorem \ref{thm:main}} \label{secProof}
In this section, we prove the main result Theorem \ref{thm:main}.

\subsection{Orthonormal Basis of the $N\times N$ Hankel Matrices Subspace}
In this subsection, we introduce an orthonormal basis of the subspace of $N\times N$ Hankel matrices and use it to define a projection from $\mathbb{C}^{N\times N}$ to the subspace of all $N\times N$ Hankel matrices.

 Let $\bm{E}_j\in\mathbb{C}^{N\times N}$, $j=0,1\ldots,2N-2$, be the Hankel matrix satisfying
\begin{equation}\label{eq:Ej}
[\bm{E}_j]_{k l}=
\begin{cases}
1/\sqrt{K_j},&\mbox{if }k+l=j,\cr
0,&\mbox{otherwise,}
\end{cases}
\qquad
k,l=0,\ldots,N-1,
\end{equation}
where $K_j=j+1$ for $j\leq N-1$ and $K_j=2N-1-j$ for $j\geq N-1$ is the number of non-zeros in $\bm{E}_j$. Then, it is easy to check that $\{\bm{E}_j\}_{j=0}^{2N-2}$ forms an orthonormal basis of the subspace of all $N\times N$ Hankel matrices, under the standard inner product in $\mathbb{C}^{N\times N}$.

Define a linear operator
\begin{equation}\label{def:G}
\mathcal{G}~:~\bm{x}\in\mathbb{C}^{2N-1}\mapsto\mathcal{G}\bm{x}=\sum_{j=0}^{2N-2}x_j\bm{E}_j\in\mathbb{C}^{N\times N}.
\end{equation}
The adjoint $\mathcal{G}^*$ of $\mathcal{G}$ is
$$
\mathcal{G}^*~:~\bm{X}\in\mathbb{C}^{N\times N}\mapsto
\mathcal{G}^*\bm{X}\in\mathbb{C}^{2N-1},\qquad [\mathcal{G}^*\bm{X}]_j=\langle\bm{X},\bm{E}_j\rangle.
$$
Obviously, $\mathcal{G}^*\mathcal{G}$ is the identity operator in $\mathbb{C}^{2N-1}$, and $\mathcal{G}\mathcal{G}^*$ is the orthogonal projector onto the subspace of all Hankel matrices.

\subsection{Recovery condition based on restricted minimum gain condition}
First of all, let us simplify the minimization problem \eqref{eq:min} by introducing $\mathcal{D}\in\mathbb{C}^{(2N-1)\times (2N-1)}$, the diagonal matrix with $j$-th diagonal $\sqrt{K_j}$.
Then, by letting $\bm{y}=\mathcal{D}\bm{x}$, \eqref{eq:min} is rewritten as,
\begin{equation}\label{eq:mincomplex}
\min_{\bm{y}}\|\mathcal{G}\bm{y}\|_*\qquad\mbox{subject to}\quad
\mathcal{B}\bm{y}=\bm{b},
\end{equation}
where $\mathcal{B}=\mathcal{A}\mathcal{D}^{-1}$. Similarly, for the noisy case, \eqref{eq:minnoise} is rearranged to
\begin{equation}\label{eq:mincomplexnoisy}
\min_{\bm{y}}\|\mathcal{G}\bm{y}\|_*\qquad\mbox{subject to}\quad
\|\mathcal{B}\bm{y}-\bm{b}\|_2\leq\epsilon.
\end{equation}
By our assumption in Theorem \ref{thm:main}, $\mathcal{B}\in\mathbb{C}^{M\times (2N-1)}$ is a random matrix whose real and imaginary parts are both real-valued random matrices with i.i.d. Gaussian entries of mean $0$ and variance $1$. We will prove $\tilde{\bm{y}}=\mathcal{D}\hat{\bm{x}}$ (respectively $\|\tilde{\bm{y}}-\hat{\bm{y}}\|_2\leq 2\delta/\epsilon$) with dominant probability for problem \eqref{eq:mincomplex} for the noise free case (respectively \eqref{eq:mincomplexnoisy} for the noisy case).

Let the desent cone of $\|\mathcal{G}\cdot\|_*$ at $\hat{\bm{y}}$ be
\begin{equation}\label{eq:tconecomplex}
\mathfrak{T}(\hat{\bm{y}})=\{\lambda\bm{z}~|~\lambda\geq 0, \|\mathcal{G}(\hat{\bm{y}}+\bm{z})\|_*\leq\|\mathcal{G}\hat{\bm{y}}\|_*\}.
\end{equation}
To characterize the recovery condition, we need to use the minimum value of $\frac{\|\mathcal{B}\bm{z}\|_2}{\|\bm{z}\|_2}$ for nonzero $\bm{z}\in\mathfrak{T}(\hat{\bm{y}})$. This quantity is commonly called the {\it minimum gain} of the measurement operator $\mathcal{B}$ restricted on $\mathfrak{T}(\hat{\bm{y}})$\cite{CRPW:FoCM:12}. In particular, if the minimum gain is bounded away from zero, then the exact recovery (respectively approximate recovery) for problem \eqref{eq:mincomplex} (respectively \eqref{eq:mincomplexnoisy}) holds.
\begin{lemma}\label{lem:tangentcone}
Let $\mathfrak{T}(\hat{\bm{y}})$ be defined by \eqref{eq:tconecomplex}.
Assume
\begin{equation}\label{eq:NullSpace}
\min_{\bm{z}\in\mathfrak{T}(\hat{\bm{y}})}\frac{\|\mathcal{B}\bm{z}\|_2}{\|\bm{z}\|_2}\geq\epsilon.
\end{equation}
\begin{enumerate}
\item[(a)]
Let $\tilde{\bm{y}}$ be the solution of \eqref{eq:mincomplex} with $\bm{b}=\mathcal{B}\hat{\bm{y}}$.
Then $\tilde{\bm{y}}=\hat{\bm{y}}$.
\item[(b)]
Let $\tilde{\bm{y}}$ be the solution of \eqref{eq:mincomplexnoisy} with $\|\bm{b}-\mathcal{B}\hat{\bm{y}}\|_2\leq\delta$.
Then $\|\tilde{\bm{y}}-\hat{\bm{y}}\|_2\leq 2\delta/\epsilon$.
\end{enumerate}
\end{lemma}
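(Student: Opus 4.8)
The plan is to prove the two claims by the standard convex-geometry argument for nuclear-norm (and more generally atomic-norm) recovery, using only the hypothesis \eqref{eq:NullSpace} that the measurement operator $\mathcal{B}$ has minimum gain at least $\epsilon$ on the descent cone $\mathfrak{T}(\hat{\bm{y}})$. The key observation is that both optimization problems \eqref{eq:mincomplex} and \eqref{eq:mincomplexnoisy} have a feasible point $\hat{\bm{y}}$, so any optimal solution $\tilde{\bm{y}}$ satisfies $\|\mathcal{G}\tilde{\bm{y}}\|_*\leq\|\mathcal{G}\hat{\bm{y}}\|_*$; by the definition \eqref{eq:tconecomplex} of the descent cone, this places the error vector $\bm{z}:=\tilde{\bm{y}}-\hat{\bm{y}}$ inside $\mathfrak{T}(\hat{\bm{y}})$. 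Then the minimum-gain hypothesis can be applied directly to $\bm{z}$.

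For part (a), with exact data $\bm{b}=\mathcal{B}\hat{\bm{y}}$, feasibility of $\tilde{\bm{y}}$ gives $\mathcal{B}\tilde{\bm{y}}=\bm{b}=\mathcal{B}\hat{\bm{y}}$, hence $\mathcal{B}\bm{z}=\bm{0}$. Since $\bm{z}\in\mathfrak{T}(\hat{\bm{y}})$, if $\bm{z}\neq\bm{0}$ then \eqref{eq:NullSpace} would force $0=\|\mathcal{B}\bm{z}\|_2/\|\bm{z}\|_2\geq\epsilon>0$, a contradiction; therefore $\bm{z}=\bm{0}$, i.e. $\tilde{\bm{y}}=\hat{\bm{y}}$. (Uniqueness of the solution follows from the same argument applied to any minimizer.)

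For part (b), with noisy data $\|\bm{b}-\mathcal{B}\hat{\bm{y}}\|_2\leq\delta$, note first that $\hat{\bm{y}}$ is feasible for \eqref{eq:mincomplexnoisy} since $\|\mathcal{B}\hat{\bm{y}}-\bm{b}\|_2\leq\delta$; here I am implicitly using that the noise level used in the constraint is $\delta$ (the tolerance parameter in \eqref{eq:mincomplexnoisy} should be read consistently with $\|\bm{b}-\mathcal{B}\hat{\bm{y}}\|_2\leq\delta$, as in the statement of Lemma \ref{lem:tangentcone}(b)). Thus $\tilde{\bm{y}}$ is a minimizer with value no larger than $\|\mathcal{G}\hat{\bm{y}}\|_*$, so again $\bm{z}=\tilde{\bm{y}}-\hat{\bm{y}}\in\mathfrak{T}(\hat{\bm{y}})$. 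By the triangle inequality,
$$
\|\mathcal{B}\bm{z}\|_2=\|(\mathcal{B}\tilde{\bm{y}}-\bm{b})-(\mathcal{B}\hat{\bm{y}}-\bm{b})\|_2\leq\|\mathcal{B}\tilde{\bm{y}}-\bm{b}\|_2+\|\mathcal{B}\hat{\bm{y}}-\bm{b}\|_2\leq\delta+\delta=2\delta,
$$
where the first term is bounded by $\delta$ because $\tilde{\bm{y}}$ is feasible and the second because of the noise assumption. Combining this with the minimum-gain bound \eqref{eq:NullSpace} applied to $\bm{z}\in\mathfrak{T}(\hat{\bm{y}})$ gives $\epsilon\|\bm{z}\|_2\leq\|\mathcal{B}\bm{z}\|_2\leq 2\delta$, i.e. $\|\tilde{\bm{y}}-\hat{\bm{y}}\|_2\leq 2\delta/\epsilon$, as claimed.

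I do not anticipate a genuine obstacle here: this lemma is the ``easy half'' of the recovery analysis, converting an assumed geometric property of $\mathcal{B}$ into recovery guarantees, and all steps are one-line consequences of feasibility, the descent-cone definition, and the triangle inequality. The only point requiring a little care is the bookkeeping of the noise parameters $\delta$ and $\epsilon$ between the constraint in \eqref{eq:mincomplexnoisy}, the minimum-gain threshold in \eqref{eq:NullSpace}, and the statement of the lemma; one must make sure the quantity appearing in the feasibility constraint matches the bound on $\|\bm{b}-\mathcal{B}\hat{\bm{y}}\|_2$ so that $\hat{\bm{y}}$ is indeed feasible. The substantive work — establishing that a scaled Gaussian $\mathcal{B}$ actually satisfies \eqref{eq:NullSpace} with the stated probability and measurement count $M\geq(C_1\sqrt{R}\ln N+\sqrt{2}\epsilon)^2+1$ — is deferred to later subsections (bounding the Gaussian width of $\mathfrak{T}(\hat{\bm{y}})$ and invoking Gordon's escape-through-a-mesh theorem), and is not part of this lemma.
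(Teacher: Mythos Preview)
Your proof is correct and follows essentially the same approach as the paper's own proof: show that optimality places the error $\bm{z}=\tilde{\bm{y}}-\hat{\bm{y}}$ in the descent cone $\mathfrak{T}(\hat{\bm{y}})$, bound $\|\mathcal{B}\bm{z}\|_2$ by $2\delta$ via the triangle inequality, and invoke the minimum-gain hypothesis \eqref{eq:NullSpace}. The only cosmetic difference is that the paper treats (a) as the $\delta=0$ special case of (b) rather than writing it out separately.
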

\begin{proof}
Since (a) is a special case of (b) with $\delta=0$, we prove (b) only. The optimality of $\tilde{\bm{y}}$ implies $\tilde{\bm{y}}-\hat{\bm{y}}\in\mathfrak{T}(\hat{\bm{y}})$. By \eqref{eq:NullSpace}, we have
$$
\|\tilde{\bm{y}}-\hat{\bm{y}}\|_2\leq \frac{1}{\epsilon}\|\mathcal{B}(\tilde{\bm{y}}-\hat{\bm{y}})\|_2
\leq \frac{1}{\epsilon}(\|\mathcal{B}\tilde{\bm{y}}-\bm{b}\|_2+\|\mathcal{B}\hat{\bm{y}}-\bm{b}\|_2)
\leq 2\delta/\epsilon.
$$
\end{proof}

Minimum gain condition is a powerful concept and has been employed in recent recovery results via $\ell_1$ norm minimization, block-sparse vector recovery,  low-rank matrix reconstruction and other atomic norms \cite{CRPW:FoCM:12}.
\subsection{Bound of minimum gain via Gaussian width}
Lemma \ref{lem:tangentcone} requires to estimate the lower bound of $\min_{\bm{z}\in\mathfrak{T}(\hat{\bm{y}})}\frac{\|\mathcal{B}\bm{z}\|_2}{\|\bm{z}\|_2}$. Gordon gave a solution using Gaussian width of a set \cite{Gor:GAFA:88,CRPW:FoCM:12} to estimate the lower bound of minimum gain.
\begin{definition}
The Gaussian width of a set $S\subset \mathbb{R}^p$ is defined as:
$$w(S):=\mathsf{E}_{\bm{\xi}}\left[\sup_{\bm{\gamma}\in S}{\bm{\gamma}^T\bm{\xi}}\right],$$
where $\bm{\xi}\in\mathbb{R}^{p}$ is a random vector of independent zero-mean unit-variance Gaussians.
\end{definition}
Let $\lambda_n$ denote the expected length of a $n$-dimensional Gaussian random vector. Then $\lambda_n=\sqrt{2}\Gamma(\frac{n+1}{2})/\Gamma(\frac{n}{2})$  and it can be tightly bounded as $\frac{n}{\sqrt{n+1}}\leq\lambda_n\leq\sqrt{n}$ \cite{CRPW:FoCM:12}. The following theorem is given in Corollary 1.2 in \cite{Gor:GAFA:88}. It gives a bound on minimum gain for a random map $\bm{\Pi}:\mathbb{R}^p\mapsto \mathbb{R}^n$.
\begin{theorem}[Corollary 1.2 in \cite{Gor:GAFA:88}]\label{thm:restrictedeigenvalue}
Let $\Omega$ be a closed subset of $\{\bm{x}\in\mathbb{R}^p|\|\bm{x}\|_2=1\}$. Let $\bm{\Pi}\in\mathbb{R}^{n\times p}$ be a random matrix with i.i.d. Gaussian entries with mean $0$ and variance $1$. Then, for any $\epsilon>0$,
$$
\mathsf{P}\left(\min_{\bm{z}\in\Omega}\|\bm{\Pi}\bm{z}\|_2\geq \epsilon\right)
\geq 1-e^{-\frac12\left(\lambda_n-w(\Omega)-\epsilon\right)^2},
$$
provided $\lambda_n-w(\Omega)-\epsilon\geq 0$. Here $\frac{n}{\sqrt{n+1}}\leq\lambda_n\leq\sqrt{n}$, and $w(\Omega)$ is the Gaussian width of $\Omega$.
\end{theorem}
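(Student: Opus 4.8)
The plan is to deduce this from Gordon's Gaussian minimax comparison inequality together with Gaussian concentration of measure; this is precisely how Corollary~1.2 of \cite{Gor:GAFA:88} is obtained, so in the paper itself one would simply cite it, but here is the argument. The first step is to turn the minimum gain into a Gaussian saddle value. Since $\|\bm{v}\|_2=\max_{\bm{u}\in\mathbb{S}^{n-1}}\langle\bm{u},\bm{v}\rangle$ for every $\bm{v}\in\mathbb{R}^n$, where $\mathbb{S}^{n-1}=\{\bm{u}\in\mathbb{R}^n:\|\bm{u}\|_2=1\}$,
$$
\min_{\bm{z}\in\Omega}\|\bm{\Pi}\bm{z}\|_2=\min_{\bm{z}\in\Omega}\max_{\bm{u}\in\mathbb{S}^{n-1}}X_{\bm{z},\bm{u}},\qquad X_{\bm{z},\bm{u}}:=\sum_{i,k}\Pi_{ik}u_iz_k ,
$$
and $(X_{\bm{z},\bm{u}})$ is a centered Gaussian process on $\Omega\times\mathbb{S}^{n-1}$ with $\mathsf{E}[X_{\bm{z},\bm{u}}X_{\bm{z}',\bm{u}'}]=\langle\bm{z},\bm{z}'\rangle\langle\bm{u},\bm{u}'\rangle$.

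The second step is the comparison. Introduce the auxiliary centered Gaussian process $Y_{\bm{z},\bm{u}}:=\langle\bm{g},\bm{z}\rangle+\langle\bm{h},\bm{u}\rangle$ with $\bm{g}\in\mathbb{R}^p$, $\bm{h}\in\mathbb{R}^n$ independent standard Gaussian vectors, so that $\mathsf{E}[Y_{\bm{z},\bm{u}}Y_{\bm{z}',\bm{u}'}]=\langle\bm{z},\bm{z}'\rangle+\langle\bm{u},\bm{u}'\rangle$. Because all indices run over unit spheres, the covariance (in)equalities demanded by Gordon's comparison theorem hold, and one concludes
$$
\mathsf{E}\Big[\min_{\bm{z}\in\Omega}\max_{\bm{u}\in\mathbb{S}^{n-1}}X_{\bm{z},\bm{u}}\Big]\ \geq\ \mathsf{E}\Big[\min_{\bm{z}\in\Omega}\max_{\bm{u}\in\mathbb{S}^{n-1}}Y_{\bm{z},\bm{u}}\Big].
$$
The right-hand process decouples, $\min_{\bm{z}}\max_{\bm{u}}(\langle\bm{g},\bm{z}\rangle+\langle\bm{h},\bm{u}\rangle)=\min_{\bm{z}\in\Omega}\langle\bm{g},\bm{z}\rangle+\max_{\bm{u}\in\mathbb{S}^{n-1}}\langle\bm{h},\bm{u}\rangle$, whose expectation equals $-w(\Omega)+\mathsf{E}\|\bm{h}\|_2=\lambda_n-w(\Omega)$, using the symmetry $-\bm{g}\stackrel{d}{=}\bm{g}$ and the definition of $\lambda_n$. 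Hence $\mathsf{E}\big[\min_{\bm{z}\in\Omega}\|\bm{\Pi}\bm{z}\|_2\big]\geq\lambda_n-w(\Omega)$.

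The third step converts this expectation bound into the tail bound. Regard $\bm{\Pi}$ as a standard Gaussian vector in $\mathbb{R}^{np}$; the function $\phi(\bm{\Pi}):=\min_{\bm{z}\in\Omega}\|\bm{\Pi}\bm{z}\|_2$ is $1$-Lipschitz in the Frobenius norm, since for any $\bm{z}\in\Omega$ and any $\bm{\Pi}'$ one has $\|\bm{\Pi}\bm{z}\|_2\leq\|\bm{\Pi}'\bm{z}\|_2+\|(\bm{\Pi}-\bm{\Pi}')\bm{z}\|_2\leq\|\bm{\Pi}'\bm{z}\|_2+\|\bm{\Pi}-\bm{\Pi}'\|_F$ (as $\|\bm{z}\|_2=1$), and minimizing over $\bm{z}$ gives $|\phi(\bm{\Pi})-\phi(\bm{\Pi}')|\leq\|\bm{\Pi}-\bm{\Pi}'\|_F$. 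By the Gaussian concentration inequality for Lipschitz functions, $\mathsf{P}\big(\phi(\bm{\Pi})\leq\mathsf{E}\phi(\bm{\Pi})-t\big)\leq e^{-t^2/2}$ for all $t\geq0$. Taking $t=\mathsf{E}\phi(\bm{\Pi})-\epsilon$, which is $\geq\lambda_n-w(\Omega)-\epsilon\geq0$ by hypothesis, yields $\mathsf{P}\big(\min_{\bm{z}\in\Omega}\|\bm{\Pi}\bm{z}\|_2\leq\epsilon\big)\leq e^{-(\mathsf{E}\phi-\epsilon)^2/2}\leq e^{-(\lambda_n-w(\Omega)-\epsilon)^2/2}$, i.e.\ the stated inequality.

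The only real obstacle is the comparison step: Gordon's minimax inequality (a two-sided refinement of Slepian's lemma) is the engine of the proof, and applying it requires choosing the auxiliary process $Y$ correctly and verifying the covariance conditions relating the bilinear process $\langle\bm{u},\bm{\Pi}\bm{z}\rangle$ to the linear process $\langle\bm{g},\bm{z}\rangle+\langle\bm{h},\bm{u}\rangle$, including the normalization that reconciles their variances; a careless choice fails the required off-diagonal inequality. Once the comparison is granted, the saddle-point reformulation, the decoupling of the auxiliary value into $\lambda_n-w(\Omega)$, and the Lipschitz-plus-concentration argument are routine, as is the bookkeeping around the regime $\lambda_n-w(\Omega)-\epsilon\geq0$.
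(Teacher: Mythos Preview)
The paper does not prove this statement at all: it is quoted verbatim as Corollary~1.2 of \cite{Gor:GAFA:88} and used as a black box, so there is no ``paper's proof'' to compare against. Your outline is the standard modern derivation of Gordon's escape-through-a-mesh bound (comparison inequality for the expectation, then Gaussian Lipschitz concentration for the tail), and it is essentially correct.

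One technical point you flag at the end but do not carry out in the body: with your choices $X_{\bm z,\bm u}=\langle\bm u,\bm\Pi\bm z\rangle$ and $Y_{\bm z,\bm u}=\langle\bm g,\bm z\rangle+\langle\bm h,\bm u\rangle$, the variances do \emph{not} match ($\mathsf{E}X_{\bm z,\bm u}^2=1$ versus $\mathsf{E}Y_{\bm z,\bm u}^2=2$), and the off-diagonal covariance inequality required by Gordon's theorem fails as stated; for $\bm z\neq\bm z'$ one would need $(1-\langle\bm z,\bm z'\rangle)(1-\langle\bm u,\bm u'\rangle)\le 0$, which is false. The standard fix is to replace $X_{\bm z,\bm u}$ by $\langle\bm u,\bm\Pi\bm z\rangle+\gamma$ with an independent scalar $\gamma\sim N(0,1)$; then all three of Gordon's covariance conditions hold with equality or the correct sign, the comparison yields $\mathsf{E}\big[\min_{\bm z}\|\bm\Pi\bm z\|_2\big]=\mathsf{E}\big[\min_{\bm z}\|\bm\Pi\bm z\|_2+\gamma\big]\ge\lambda_n-w(\Omega)$, and your concentration step goes through unchanged. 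With that correction made explicit, the argument is complete.
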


By converting the complex setting in our problem to the real setting and using Theorem \ref{thm:restrictedeigenvalue}, we can get the bound of \eqref{eq:NullSpace} in terms of Gaussian width of $\mathfrak{T}_{\mathbb{R}}(\hat{\bm{y}})\cap \mathbb{S}_{\mathbb{R}}^{4N-3}$, where $\mathfrak{T}_{\mathbb{R}}(\hat{\bm{y}})$ is a cone in $\mathbb{R}^{4N-2}$ defined by
 \begin{equation}\label{eq:tconereal}
\mathfrak{T}_{\mathbb{R}}(\hat{\bm{y}})=\left\{\left[\begin{matrix}\bm{\alpha}\cr\bm{\beta}\end{matrix}\right]\Big|~\bm{\alpha}+\imath\bm{\beta}\in\mathfrak{T}(\hat{\bm{y}})\right\}.
\end{equation}
\begin{lemma}\label{lem:Gaussianwidth}
Let the real and imaginary parts of entries of $\mathcal{B}\in\mathbb{C}^{M\times (2N-1)}$ be i.i.d. Gaussian with mean $0$ and variance $1$. Let $\mathfrak{T}_{\mathbb{R}}(\hat{\bm{y}})$ be defined by \eqref{eq:tconereal} and $\mathbb{S}_c^{2N-2}$ be the unit sphere in $\mathbb{C}^{2N-1}$. Then for any $\epsilon>0$,
$$
\mathsf{P}\left(\min_{\bm{z}\in\mathfrak{T}(\hat{\bm{y}})\cap\mathbb{S}_c^{2N-2}}\|\mathcal{B}\bm{z}\|_2\geq\epsilon\right)\geq 1-2e^{-\frac12\left(\lambda_{M}-w(\mathfrak{T}_{\mathbb{R}}(\hat{\bm{y}})\cap\mathbb{S}_{\mathbb{R}}^{4N-3})-\frac{\epsilon}{\sqrt{2}}\right)^2},
$$
where $\mathbb{S}_{\mathbb{R}}^{4N-3}$ is the unit sphere in $\mathbb{R}^{4N-2}$.
\end{lemma}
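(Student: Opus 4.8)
The plan is to reduce the complex minimum‑gain statement to the real‑valued Gaussian minimum‑gain bound of Theorem \ref{thm:restrictedeigenvalue} by the standard device of splitting real and imaginary parts. First I would write $\mathcal{B}=\mathcal{B}_1+\imath\mathcal{B}_2$, where $\mathcal{B}_1,\mathcal{B}_2\in\mathbb{R}^{M\times(2N-1)}$ have i.i.d.\ mean‑zero, variance‑one Gaussian entries, and correspondingly write $\bm{z}=\bm{\alpha}+\imath\bm{\beta}$ with $\bm{\alpha},\bm{\beta}\in\mathbb{R}^{2N-1}$. A direct computation gives
$$
\mathcal{B}\bm{z}=(\mathcal{B}_1\bm{\alpha}-\mathcal{B}_2\bm{\beta})+\imath(\mathcal{B}_1\bm{\beta}+\mathcal{B}_2\bm{\alpha}),
$$
so that $\|\mathcal{B}\bm{z}\|_2^2=\|\mathcal{B}_1\bm{\alpha}-\mathcal{B}_2\bm{\beta}\|_2^2+\|\mathcal{B}_1\bm{\beta}+\mathcal{B}_2\bm{\alpha}\|_2^2$. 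The key observation is that this equals $\big\|\bm{\Pi}\,[\bm{\alpha};\bm{\beta}]\big\|_2^2$, where $\bm{\Pi}\in\mathbb{R}^{2M\times(4N-2)}$ is the block matrix $\bigl[\begin{smallmatrix}\mathcal{B}_1 & -\mathcal{B}_2\\ \mathcal{B}_2 & \mathcal{B}_1\end{smallmatrix}\bigr]$. This $\bm{\Pi}$ does \emph{not} have i.i.d.\ Gaussian entries (the two copies of $\mathcal{B}_1$ and of $\mathcal{B}_2$ are repeated), so Theorem \ref{thm:restrictedeigenvalue} cannot be applied to $\bm{\Pi}$ directly; this is the one genuine obstacle.

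To get around it I would instead bound $\|\mathcal{B}\bm{z}\|_2$ from below by one of its two real components and exploit that $[\mathcal{B}_1,-\mathcal{B}_2]\in\mathbb{R}^{M\times(4N-2)}$ \emph{does} have i.i.d.\ standard Gaussian entries. Concretely, since $\|\mathcal{B}\bm{z}\|_2\ge\|\mathcal{B}_1\bm{\alpha}-\mathcal{B}_2\bm{\beta}\|_2=\big\|[\mathcal{B}_1,-\mathcal{B}_2]\,[\bm{\alpha};\bm{\beta}]\big\|_2$, it suffices to lower bound the minimum gain of the genuine Gaussian matrix $\Pi':=[\mathcal{B}_1,-\mathcal{B}_2]$ over the set $\mathfrak{T}_{\mathbb{R}}(\hat{\bm{y}})\cap\mathbb{S}_{\mathbb{R}}^{4N-3}$, noting that for $\bm{z}\in\mathbb{S}_c^{2N-2}$ the vector $[\bm{\alpha};\bm{\beta}]$ lies in $\mathbb{S}_{\mathbb{R}}^{4N-3}$ and, by definition \eqref{eq:tconereal}, in $\mathfrak{T}_{\mathbb{R}}(\hat{\bm{y}})$. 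Applying Theorem \ref{thm:restrictedeigenvalue} with $p=4N-2$, $n=M$, $\Omega=\mathfrak{T}_{\mathbb{R}}(\hat{\bm{y}})\cap\mathbb{S}_{\mathbb{R}}^{4N-3}$, and threshold $\epsilon/\sqrt{2}$ yields
$$
\mathsf{P}\!\left(\min_{[\bm{\alpha};\bm{\beta}]\in\Omega}\|\Pi'[\bm{\alpha};\bm{\beta}]\|_2\ge \tfrac{\epsilon}{\sqrt 2}\right)\ge 1-e^{-\frac12(\lambda_M-w(\Omega)-\epsilon/\sqrt2)^2}.
$$

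To sharpen the constant to the factor‑of‑two loss claimed, I would run the same argument on the \emph{other} real component $\|\mathcal{B}_1\bm{\beta}+\mathcal{B}_2\bm{\alpha}\|_2=\big\|[\mathcal{B}_2,\mathcal{B}_1]\,[\bm{\alpha};\bm{\beta}]\big\|_2$, using that $[\mathcal{B}_2,\mathcal{B}_1]$ is also i.i.d.\ standard Gaussian and that the map $[\bm{\alpha};\bm{\beta}]\mapsto[\bm{\beta};\bm{\alpha}]$ is an isometry, so the relevant Gaussian width is unchanged (or, slightly more carefully, observe $\|\mathcal{B}\bm z\|_2^2\ge\tfrac12(\|\mathcal{B}_1\bm\alpha-\mathcal{B}_2\bm\beta\|_2+\|\mathcal{B}_1\bm\beta+\mathcal{B}_2\bm\alpha\|_2)^2\cdot\tfrac12$ is false, so the cleaner route is: with probability $\ge 1-2e^{-\frac12(\lambda_M-w(\Omega)-\epsilon/\sqrt2)^2}$ \emph{both} components are $\ge\epsilon/\sqrt2$ simultaneously by a union bound over the two Gaussian submatrices, hence $\|\mathcal{B}\bm z\|_2\ge\sqrt{(\epsilon/\sqrt2)^2+(\epsilon/\sqrt2)^2}=\epsilon$). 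Finally, identifying $\Omega=\mathfrak{T}_{\mathbb{R}}(\hat{\bm{y}})\cap\mathbb{S}_{\mathbb{R}}^{4N-3}$ gives exactly the stated bound $1-2e^{-\frac12(\lambda_M-w(\mathfrak{T}_{\mathbb{R}}(\hat{\bm{y}})\cap\mathbb{S}_{\mathbb{R}}^{4N-3})-\epsilon/\sqrt2)^2}$. The only points requiring care are the bookkeeping that $[\mathcal{B}_1,-\mathcal{B}_2]$ and $[\mathcal{B}_2,\mathcal{B}_1]$ are each (marginally) i.i.d.\ standard Gaussian matrices of size $M\times(4N-2)$, and the union bound accounting for the factor $2$ in the failure probability; everything else is the verbatim content of Theorem \ref{thm:restrictedeigenvalue} and the definitions \eqref{eq:tconecomplex}–\eqref{eq:tconereal}.
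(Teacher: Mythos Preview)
Your proposal is correct and is essentially the same argument as the paper's: split $\mathcal{B}$ into real and imaginary parts, observe that $\|\mathcal{B}\bm z\|_2^2$ equals the sum of the squared norms of $[\mathcal{B}_1,-\mathcal{B}_2][\bm\alpha;\bm\beta]$ and $[\mathcal{B}_2,\mathcal{B}_1][\bm\alpha;\bm\beta]$, apply Theorem~\ref{thm:restrictedeigenvalue} to each of these genuine $M\times(4N-2)$ i.i.d.\ Gaussian matrices at threshold $\epsilon/\sqrt2$, and union-bound to pick up the factor~$2$. The paper's proof is exactly this, modulo notation ($\bm\Phi,\bm\Psi$ for your $\mathcal{B}_1,\mathcal{B}_2$).
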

\begin{proof}
In order to use Theorem \ref{thm:restrictedeigenvalue}, we convert the complex setting in our problem to the real setting in Theorem \ref{thm:restrictedeigenvalue}. We will use Roman letters for vectors and matrices in complex-valued spaces, and Greek letters for real valued ones. Let $\mathcal{B}=\bm{\Phi}+\imath\bm{\Psi}\in\mathbb{C}^{M\times (2N-1)}$, where both $\bm{\Phi}\in\mathbb{R}^{M\times (2N-1)}$ and $\bm{\Psi}\in\mathbb{R}^{M\times (2N-1)}$ are real-valued random matrices whose entries are i.i.d. mean-$0$ variance-$1$ Gaussian. Then, for any $\bm{z}=\bm{\alpha}+\imath\bm{\beta}\in\mathbb{C}^{2N-1}$ with $\bm{\alpha},\bm{\beta}\in\mathbb{R}^{2N-1}$,
\begin{equation*}
\begin{split}
\|\mathcal{B}\bm{z}\|_2&=\|(\bm{\Phi}+\imath\bm{\Psi})(\bm{\alpha}+\imath\bm{\beta})\|_2
=\left\|(\bm{\Phi}\bm{\alpha}-\bm{\Psi}\bm{\beta})+\imath(\bm{\Psi}\bm{\alpha}+\bm{\Phi}\bm{\beta})\right\|_2\cr
&=\left(\left\|\left[\begin{matrix}\bm{\Phi}&-\bm{\Psi}\end{matrix}\right]
\left[\begin{matrix}\bm{\alpha}\cr \bm{\beta}\end{matrix}\right]\right\|_2^2
+\left\|\left[\begin{matrix}\bm{\Psi}&\bm{\Phi}\end{matrix}\right]
\left[\begin{matrix}\bm{\alpha}\cr \bm{\beta}\end{matrix}\right]\right\|_2^2\right)^{1/2}
\end{split}
\end{equation*}
Then
\begin{equation}\label{eq:event1}
\min_{\bm{z}=\bm{\alpha}+\imath\bm{\beta}\in\mathfrak{T}(\hat{\bm{y}})\cap\mathbb{S}_c^{2N-2}}
\left\|\left[\begin{matrix}\bm{\Phi}&-\bm{\Psi}\end{matrix}\right]
\left[\begin{matrix}\bm{\alpha}\cr \bm{\beta}\end{matrix}\right]\right\|_2\geq \epsilon/\sqrt{2},\quad\mbox{and}
\min_{\bm{z}=\bm{\alpha}+\imath\bm{\beta}\in\mathfrak{T}(\hat{\bm{y}})\cap\mathbb{S}_c^{2N-2}}
\left\|\left[\begin{matrix}\bm{\Psi}&\bm{\Phi}\end{matrix}\right]
\left[\begin{matrix}\bm{\alpha}\cr \bm{\beta}\end{matrix}\right]\right\|_2\geq \epsilon/\sqrt{2}
\end{equation}
implies
$$
\min_{\bm{z}\in\mathfrak{T}(\hat{\bm{y}})\cap\mathbb{S}_c^{2N-2}}\|\mathcal{B}\bm{z}\|_2
\geq \epsilon.
$$
Therefore,
$$
\mathsf{P}\left(\min_{\bm{z}\in\mathfrak{T}(\hat{\bm{y}})\cap\mathbb{S}_c^{2N-2}}\|\mathcal{B}\bm{z}\|_2\geq\epsilon\right)
\geq
\mathsf{P}\left(\mbox{\eqref{eq:event1} holds true}\right).
$$
It is easy to see that both $\left[\begin{matrix}\bm{\Phi}&-\bm{\Psi}\end{matrix}\right]$ and $\left[\begin{matrix}\bm{\Psi}&\bm{\Phi}\end{matrix}\right]$ are real-valued random matrices with i.i.d. Gaussian entries of mean $0$ and variance $1$. By Theorem \ref{thm:restrictedeigenvalue},
$$
\mathsf{P}\left(\mbox{\eqref{eq:event1} holds true}\right)
\geq 1-2e^{-\frac12\left(\lambda_{M}-w(\mathfrak{T}_{\mathbb{R}}(\hat{\bm{y}})\cap\mathbb{S}_{\mathbb{R}}^{4N-3})-\frac{\epsilon}{\sqrt{2}}\right)^2},
$$ and therefore we get the desired result.
\end{proof}

\subsection{Estimation of Gaussian width $w(\mathfrak{T}_{\mathbb{R}}(\hat{\bm{y}})\cap\mathbb{S}_{\mathbb{R}}^{4N-3})$}
Denote $\mathfrak{T}_{\mathbb{R}}^{*}(\hat{\bm{y}}))$ be polar cone of $\mathfrak{T}_{\mathbb{R}}(\hat{\bm{y}}))\in\mathbb{R}^{4N-2}$, i.e.,
\begin{equation}\label{def:polar}
\mathfrak{T}_{\mathbb{R}}^{*}(\hat{\bm{y}})
=\{\bm{\delta}\in\mathbb{R}^{4N-2}~|~\bm{\gamma}^T\bm{\delta}\leq 0,~\forall\bm{\gamma}\in\mathfrak{T}_{\mathbb{R}}(\hat{\bm{y}})\}.
\end{equation}
Following the arguments in Proposition 3.6 in \cite{CRPW:FoCM:12}, we obtain
\begin{equation}\label{eq:widthest1}
w(\mathfrak{T}_{\mathbb{R}}(\hat{\bm{y}})\cap\mathbb{S}_{\mathbb{R}}^{4N-3})
=\mathsf{E}\left(\sup_{\bm{\gamma}\in\mathfrak{T}_{\mathbb{R}}(\hat{\bm{y}})\cap\mathbb{S}_{\mathbb{R}}^{4N-3}}\bm{\xi}^T\bm{\gamma}\right)
\leq \mathsf{E}\left(\min_{\bm{\gamma}\in\mathfrak{T}_{\mathbb{R}}^{*}(\hat{\bm{y}})}\|\bm{\xi}-\bm{\gamma}\|_2\right),
\end{equation}
where $\bm{\xi}\in\mathbb{R}^{4N-2}$ is a random vector of i.i.d. Gaussian entries of mean $0$ and variance $1$.
Hence, instead of estimating Gaussian width $w(\mathfrak{T}_{\mathbb{R}}(\hat{\bm{y}})\cap\mathbb{S}_{\mathbb{R}}^{4N-3})$, we bound $\mathsf{E}\left(\min_{\bm{\gamma}\in\mathfrak{T}_{\mathbb{R}}^{*}(\hat{\bm{y}})}\|\bm{\xi}-\bm{\gamma}\|_2\right)$.
For this purpose, let $\mathcal{F}~:~\mathbb{R}^{4N-2}\mapsto\mathbb{R}$ be defined by
\begin{equation}\label{def:F}
\mathcal{F}\left(\left[\begin{matrix}\bm{\alpha}\cr\bm{\beta}\end{matrix}\right]\right)=\|\mathcal{G}(\bm{\alpha}+\imath\bm{\beta})\|_*.
\end{equation}
The following lemma gives us a characterization of $\mathsf{E}\left(\min_{\bm{\gamma}\in\mathfrak{T}_{\mathbb{R}}^{*}(\hat{\bm{y}})}\|\bm{\xi}-\bm{\gamma}\|_2\right)$ in terms of the subdifferential $\partial\mathcal{F}$ of $\mathcal{F}$.

\begin{lemma}\label{lemma:polar}
Let $\mathfrak{T}_{\mathbb{R}}^{*}(\hat{\bm{y}})$ and $\mathcal{F}$ be defined by \eqref{def:polar} and  \eqref{def:F} respectively. Let $\hat{\bm{\omega}}_1, \hat{\bm{\omega}}_2\in\mathbb{R}^{2N-1}$ be the real and imaginary parts of $\hat{\bm{y}}$ respectively and denote $\hat{\bm{\omega}}=\left[\begin{matrix}\hat{\bm{\omega}}_1\cr\hat{\bm{\omega}}_2\end{matrix}\right]$. Then
\begin{equation}\label{eq:dualcone}
\mathfrak{T}_{\mathbb{R}}^{*}(\hat{\bm{y}})
=\mathrm{cone}\left(\partial \mathcal{F}\left(\hat{\bm{\omega}}\right)\right)=\left\{\lambda\bm{\delta}~|~\lambda\geq0,~\mathcal{F}\left(\bm{\gamma}+\hat{\bm{\omega}}\right)
\geq \mathcal{F}\left(\hat{\bm{\omega}}\right)
+\bm{\gamma}^T\bm{\delta},~\forall \bm{\gamma}\in\mathbb{R}^{4N-2}
\right\}.
\end{equation}
\end{lemma}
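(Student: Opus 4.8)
This is the standard convex-analytic identity that the polar of a descent cone equals the conic hull of the subdifferential; the only real content is verifying the hypotheses under which it holds \emph{exactly} rather than only up to closure. The plan is first to record the structural properties of $\mathcal{F}$. It is convex and finite on all of $\mathbb{R}^{4N-2}$, being the composition of the nuclear norm with the linear map $(\bm{\alpha},\bm{\beta})\mapsto\mathcal{G}(\bm{\alpha}+\imath\bm{\beta})$; moreover it is a genuine norm, because $\mathcal{G}^*\mathcal{G}=I$ forces $\mathcal{G}$ to be injective, so $\mathcal{F}(\bm{v})=0$ only for $\bm{v}=0$. Hence $0$ is the unique minimizer of $\mathcal{F}$, and since $\hat{\bm{\omega}}\neq 0$ (equivalently $\hat{\bm{x}}\neq 0$) we get $0\notin\partial\mathcal{F}(\hat{\bm{\omega}})$, while $\partial\mathcal{F}(\hat{\bm{\omega}})$ is always nonempty, compact and convex. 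Finally, unwinding \eqref{eq:tconecomplex}--\eqref{eq:tconereal}, $\mathfrak{T}_{\mathbb{R}}(\hat{\bm{y}})$ is exactly $\bigcup_{\lambda\geq 0}\lambda S$ with $S=\{\bm{\gamma}:\mathcal{F}(\hat{\bm{\omega}}+\bm{\gamma})\leq\mathcal{F}(\hat{\bm{\omega}})\}$, and the second equality asserted in the lemma is merely the definition of $\mathrm{cone}(\cdot)$ combined with the definition of a subgradient, so it needs no argument.

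For the inclusion $\mathrm{cone}(\partial\mathcal{F}(\hat{\bm{\omega}}))\subseteq\mathfrak{T}_{\mathbb{R}}^{*}(\hat{\bm{y}})$ I would argue directly from the subgradient inequality: for $\bm{\delta}\in\partial\mathcal{F}(\hat{\bm{\omega}})$ and $\bm{\gamma}=\lambda\bm{\gamma}_0$ with $\lambda\geq 0$, $\bm{\gamma}_0\in S$, one has $\bm{\gamma}_0^{T}\bm{\delta}\leq\mathcal{F}(\hat{\bm{\omega}}+\bm{\gamma}_0)-\mathcal{F}(\hat{\bm{\omega}})\leq 0$, hence $\bm{\gamma}^{T}\bm{\delta}\leq 0$; since $\mathfrak{T}_{\mathbb{R}}^{*}(\hat{\bm{y}})$ is a cone, this gives the inclusion.

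The reverse inclusion is the substantive step, and I would split it into three parts. First, identify the closure: $\overline{\mathfrak{T}_{\mathbb{R}}(\hat{\bm{y}})}=\{\bm{\gamma}:\mathcal{F}'(\hat{\bm{\omega}};\bm{\gamma})\leq 0\}$, where $\mathcal{F}'(\hat{\bm{\omega}};\cdot)$ is the (sublinear, continuous) directional derivative. The ``$\subseteq$'' direction uses that $t\mapsto t^{-1}\big(\mathcal{F}(\hat{\bm{\omega}}+t\bm{\gamma})-\mathcal{F}(\hat{\bm{\omega}})\big)$ is nondecreasing, so $\bm{\gamma}_0\in S$ forces $\mathcal{F}'(\hat{\bm{\omega}};\bm{\gamma}_0)\leq 0$; the ``$\supseteq$'' direction treats $\mathcal{F}'(\hat{\bm{\omega}};\bm{\gamma})<0$ directly (then $t\bm{\gamma}\in S$ for small $t>0$) and the boundary case $\mathcal{F}'(\hat{\bm{\omega}};\bm{\gamma})=0$ by perturbing to $\bm{\gamma}_n=\bm{\gamma}-\tfrac1n\hat{\bm{\omega}}$, which has $\mathcal{F}'(\hat{\bm{\omega}};\bm{\gamma}_n)<0$ because $\mathcal{F}'(\hat{\bm{\omega}};-\hat{\bm{\omega}})=-\mathcal{F}(\hat{\bm{\omega}})<0$. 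Second, note that $\mathcal{F}'(\hat{\bm{\omega}};\cdot)$ is the support function of $\partial\mathcal{F}(\hat{\bm{\omega}})$, so the set just found is precisely the polar cone of $\mathrm{cone}(\partial\mathcal{F}(\hat{\bm{\omega}}))$; taking polars, using that polarity is insensitive to closure (so $\mathfrak{T}_{\mathbb{R}}^{*}(\hat{\bm{y}})=(\overline{\mathfrak{T}_{\mathbb{R}}(\hat{\bm{y}})})^{*}$) together with the bipolar theorem for the convex cone $\mathrm{cone}(\partial\mathcal{F}(\hat{\bm{\omega}}))$, yields $\mathfrak{T}_{\mathbb{R}}^{*}(\hat{\bm{y}})=\overline{\mathrm{cone}(\partial\mathcal{F}(\hat{\bm{\omega}}))}$. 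Third, remove the closure: since $\partial\mathcal{F}(\hat{\bm{\omega}})$ is compact and avoids $0$, its conic hull is already closed (if $\lambda_n\bm{\delta}_n\to\bm{w}$ with $\bm{\delta}_n\in\partial\mathcal{F}(\hat{\bm{\omega}})$, the $\bm{\delta}_n$ are bounded away from $0$, so the $\lambda_n$ are bounded and a subsequential limit has the form $\lambda\bm{\delta}$). Combining the three parts gives $\mathfrak{T}_{\mathbb{R}}^{*}(\hat{\bm{y}})=\mathrm{cone}(\partial\mathcal{F}(\hat{\bm{\omega}}))$.

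The hard part will be this closedness bookkeeping rather than any estimate: $\mathfrak{T}_{\mathbb{R}}(\hat{\bm{y}})$ as defined is in general \emph{not} closed, so one cannot simply invoke $(K^{*})^{*}=K$, and if $0$ belonged to $\partial\mathcal{F}(\hat{\bm{\omega}})$ the conic hull would fail to be closed and the clean equality would break down. The hypotheses that rescue it are exactly $\mathcal{G}^*\mathcal{G}=I$ (injectivity of $\mathcal{G}$, hence $\mathcal{F}$ a norm minimized only at $0$) and $\hat{\bm{x}}\neq 0$. The remaining ingredients---monotonicity of difference quotients, the support-function description of the directional derivative, and the bipolar theorem---are routine convex analysis, and indeed \eqref{eq:widthest1} already attributes this circle of ideas to Proposition 3.6 of \cite{CRPW:FoCM:12}.
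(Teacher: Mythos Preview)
Your proposal is correct and follows the same approach as the paper: both identify $\mathfrak{T}_{\mathbb{R}}(\hat{\bm{y}})$ as the descent cone of $\mathcal{F}$ at $\hat{\bm{\omega}}$ and then appeal to the convex-analytic duality between the polar of a descent cone and the conic hull of the subdifferential. The paper simply cites this identity as Theorem~23.4 in Rockafellar~\cite{Roc:BOOK:97}, whereas you unpack it in full---support-function description of the directional derivative, bipolar theorem, and the closedness bookkeeping (using $0\notin\partial\mathcal{F}(\hat{\bm{\omega}})$, which relies on $\hat{\bm{x}}\neq 0$) that the one-line citation leaves implicit.
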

\begin{proof}
It is observed that $\mathfrak{T}_{\mathbb{R}}(\hat{\bm{y}})$ in  \eqref{eq:tconereal} is the descent cone of the function $\mathcal{F}$
$$
\mathfrak{T}_{\mathbb{R}}(\hat{\bm{y}})=\left\{\delta\bm{\gamma}~|~\delta\geq0,~\mathcal{F}\left(\bm{\gamma}+\hat{\bm{\omega}}\right)
\leq \mathcal{F}\left(\hat{\bm{\omega}}\right)\right\}.
$$
According to Theorem 23.4 in \cite{Roc:BOOK:97}, the cone dual to the descent cone is the conic hull of subgradient, which is exactly \eqref{eq:dualcone}.
\end{proof}

The following lemma gives us an estimation of Gaussian width $w(\mathfrak{T}_{\mathbb{R}}(\hat{\bm{y}})\cap\mathbb{S}_{\mathbb{R}}^{4N-3})$ in terms of $\mathsf{E}(\|\mathcal{G}\bm{g}\|_2)$.
\begin{lemma} \label{lem:boundGaussianWidth}Let $\mathfrak{T}_{\mathbb{R}}(\hat{\bm{y}})$ and $\mathcal{G}$ be defined by \eqref{eq:tconereal} and \eqref{def:G} respectively. Then
$$
w(\mathfrak{T}_{\mathbb{R}}(\hat{\bm{y}})\cap\mathbb{S}_{\mathbb{R}}^{4N-3})
\leq 3\sqrt{R}\cdot\mathsf{E}(\|\mathcal{G}\bm{g}\|_2),
$$
where $\mathsf{E}(\|\mathcal{G}\bm{g}\|_2)$ is the expectation with respect to $\bm{g}\in\mathbb{C}^{2N-1}$. Here $\bm{g}$ is a random vector whose real and imaginary parts are i.i.d. mean-0 and variance-1 Gaussian entries.
\end{lemma}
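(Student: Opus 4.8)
By the estimate \eqref{eq:widthest1} together with Lemma \ref{lemma:polar}, which identifies $\mathfrak{T}_{\mathbb{R}}^{*}(\hat{\bm{y}})$ with $\mathrm{cone}\bigl(\partial\mathcal{F}(\hat{\bm{\omega}})\bigr)$, it suffices to bound
\begin{equation*}
\mathsf{E}\left(\min_{\bm{\gamma}\in\mathrm{cone}(\partial\mathcal{F}(\hat{\bm{\omega}}))}\|\bm{\xi}-\bm{\gamma}\|_2\right).
\end{equation*}
Since the feasible set is a conic hull, for \emph{every} $t\ge 0$ and \emph{every} $\bm{\delta}\in\partial\mathcal{F}(\hat{\bm{\omega}})$ the quantity inside the expectation is at most $\|\bm{\xi}-t\bm{\delta}\|_2$. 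The plan is therefore to choose $t$ and $\bm{\delta}$ depending on $\bm{\xi}$, bound $\|\bm{\xi}-t\bm{\delta}\|_2$ pointwise by $3\sqrt{R}\,\|\mathcal{G}\bm{g}\|_2$, and take expectations.

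The first step is to pass to the matrix picture. Identifying $\mathbb{R}^{4N-2}$ with $\mathbb{C}^{2N-1}$ via $[\bm{\alpha};\bm{\beta}]\mapsto\bm{\alpha}+\imath\bm{\beta}$ (an isometry for the Euclidean norm) turns $\bm{\xi}$ into the complex Gaussian vector $\bm{g}$ and turns $\mathcal{F}$ into $\bm{y}\mapsto\|\mathcal{G}\bm{y}\|_*$. Because $\mathcal{G}$ is a linear isometry and $\|\cdot\|_*$ is finite everywhere, the subdifferential chain rule gives $\partial(\|\mathcal{G}\cdot\|_*)(\hat{\bm{y}})=\mathcal{G}^{*}\partial\|\cdot\|_*(\mathcal{G}\hat{\bm{y}})$; recalling $\hat{\bm{y}}=\mathcal{D}\hat{\bm{x}}$ one checks $\mathcal{G}\hat{\bm{y}}=\hat{\bm{H}}$, which by the Vandermonde factorization of Section \ref{secMMR} has rank $R$. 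Writing its compact SVD as $\hat{\bm{H}}=\bm{U}\bm{\Sigma}\bm{V}^{*}$ with $\bm{U},\bm{V}\in\mathbb{C}^{N\times R}$, letting $\mathcal{T}=\{\bm{U}\bm{A}^{*}+\bm{B}\bm{V}^{*}\}$ be the tangent space with orthogonal projections $\mathcal{P}_{\mathcal{T}},\mathcal{P}_{\mathcal{T}^{\perp}}$, the nuclear-norm subdifferential at $\hat{\bm{H}}$ is $\{\bm{U}\bm{V}^{*}+\bm{W}:\bm{W}\in\mathcal{T}^{\perp},\ \|\bm{W}\|_2\le 1\}$. Hence every candidate $t\bm{\delta}$ corresponds to $t\,\mathcal{G}^{*}(\bm{U}\bm{V}^{*}+\bm{W})$, and, since $\mathcal{G}$ is an isometry,
\begin{equation*}
\|\bm{\xi}-t\bm{\delta}\|_2=\|\bm{g}-t\mathcal{G}^{*}(\bm{U}\bm{V}^{*}+\bm{W})\|_2=\|\mathcal{G}\bm{g}-t\,\mathcal{G}\mathcal{G}^{*}(\bm{U}\bm{V}^{*}+\bm{W})\|_F .
\end{equation*}

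Now the key observation: $\mathcal{G}\bm{g}$ is itself a Hankel matrix and $\mathcal{G}\mathcal{G}^{*}$ is the orthogonal projector onto the Hankel subspace, so the matrix above equals the Hankel projection of $\mathcal{G}\bm{g}-t(\bm{U}\bm{V}^{*}+\bm{W})$ and therefore has Frobenius norm at most $\|\mathcal{G}\bm{g}-t(\bm{U}\bm{V}^{*}+\bm{W})\|_F$. This removes the Hankel projector and reduces matters to the familiar low-rank distance estimate, but crucially with $\mathcal{G}\bm{g}$ (whose spectral norm is small) in place of a full Gaussian matrix. I would then take $t:=\|\mathcal{P}_{\mathcal{T}^{\perp}}(\mathcal{G}\bm{g})\|_2$ and $\bm{W}:=\mathcal{P}_{\mathcal{T}^{\perp}}(\mathcal{G}\bm{g})/t$, which is a legitimate subgradient component since it lies in $\mathcal{T}^{\perp}$ and has unit spectral norm (the event $t=0$ means $\mathcal{G}\bm{g}\in\mathcal{T}$, hence $\mathrm{rank}(\mathcal{G}\bm{g})\le 2R$, and is handled at once by taking $\bm{\gamma}=0$). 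With this choice the $\mathcal{T}^{\perp}$-part cancels exactly, and using $\|\bm{U}\bm{V}^{*}\|_F=\sqrt{R}$,
\begin{equation*}
\|\mathcal{G}\bm{g}-t(\bm{U}\bm{V}^{*}+\bm{W})\|_F=\|\mathcal{P}_{\mathcal{T}}(\mathcal{G}\bm{g})-t\bm{U}\bm{V}^{*}\|_F\le\|\mathcal{P}_{\mathcal{T}}(\mathcal{G}\bm{g})\|_F+\sqrt{R}\,\|\mathcal{P}_{\mathcal{T}^{\perp}}(\mathcal{G}\bm{g})\|_2 .
\end{equation*}
Finally I would bound the two terms by $\|\mathcal{G}\bm{g}\|_2$: from $\mathcal{P}_{\mathcal{T}^{\perp}}\bm{X}=(\bm{I}-\bm{U}\bm{U}^{*})\bm{X}(\bm{I}-\bm{V}\bm{V}^{*})$ one gets $\|\mathcal{P}_{\mathcal{T}^{\perp}}(\mathcal{G}\bm{g})\|_2\le\|\mathcal{G}\bm{g}\|_2$, and writing $\mathcal{P}_{\mathcal{T}}\bm{X}=\bm{U}\bm{U}^{*}\bm{X}(\bm{I}-\bm{V}\bm{V}^{*})+\bm{X}\bm{V}\bm{V}^{*}$ as a sum of two Frobenius-orthogonal matrices, each of rank at most $R$ and spectral norm at most $\|\bm{X}\|_2$, one gets $\|\mathcal{P}_{\mathcal{T}}(\mathcal{G}\bm{g})\|_F\le\sqrt{2R}\,\|\mathcal{G}\bm{g}\|_2$. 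Combining yields $\|\bm{\xi}-t\bm{\delta}\|_2\le(\sqrt{2}+1)\sqrt{R}\,\|\mathcal{G}\bm{g}\|_2\le 3\sqrt{R}\,\|\mathcal{G}\bm{g}\|_2$, and taking expectations gives the lemma.

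I expect the only delicate point — the ``hard part'' — to be the careful bookkeeping of the complex-to-real identification and the subdifferential chain rule, so that no spurious factor of $\sqrt{2}$ enters and the final constant genuinely stays below $3$, together with the justification that dropping the Hankel projector is legitimate; the matrix-norm inequalities and the choice of $(t,\bm{W})$ are then routine.
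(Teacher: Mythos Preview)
Your proposal is correct and follows essentially the same route as the paper: reduce via \eqref{eq:widthest1} and Lemma~\ref{lemma:polar} to the distance from $\bm{\xi}$ to $\mathrm{cone}(\partial\mathcal{F}(\hat{\bm{\omega}}))$, pass to the complex picture, drop the Hankel projector $\mathcal{G}\mathcal{G}^*$ by the projection inequality, and pick $t=\|\mathcal{P}_{\mathcal{T}^\perp}(\mathcal{G}\bm{g})\|_2$, $\bm{W}=\mathcal{P}_{\mathcal{T}^\perp}(\mathcal{G}\bm{g})/t$. The only differences are cosmetic: the paper carries out the complex-to-real bookkeeping explicitly via a $2N\times 2N$ block representation rather than invoking the chain rule, and it bounds $\|\mathcal{P}_{\mathcal{T}}(\mathcal{G}\bm{g})\|_F$ by $2\sqrt{R}\,\|\mathcal{G}\bm{g}\|_2$ using the triangle inequality, whereas your Frobenius-orthogonality observation sharpens this to $\sqrt{2R}\,\|\mathcal{G}\bm{g}\|_2$ and hence gives the constant $(\sqrt{2}+1)\sqrt{R}$ before relaxing to $3\sqrt{R}$.
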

\begin{proof}
By using \eqref{eq:widthest1} and Lemma \ref{lemma:polar}, we need to find $\partial \mathcal{F}\left(\hat{\bm{\omega}}\right)$ and thus $\mathfrak{T}_{\mathbb{R}}^{*}(\hat{\bm{y}})$. Let $\hat{\bm{\Omega}}_1=\mathcal{G}\hat{\bm{\omega}}_1$ and $\hat{\bm{\Omega}}_2=\mathcal{G}\hat{\bm{\omega}}_2$. Then $\mathcal{G}\hat{\bm{y}}=\hat{\bm{\Omega}}_1+\imath\hat{\bm{\Omega}}_2$. Let a singular value decomposition of the rank-$R$ matrix $\mathcal{G}\hat{\bm{y}}$ be
\begin{equation}\label{eq:SVDcomplexGy}
\mathcal{G}\hat{\bm{y}}=\bm{U}\bm{\Sigma}\bm{V}^*,
\qquad\mbox{with}\quad
\bm{U} = \bm{\Theta}_1+\imath\bm{\Theta}_2,~~\bm{V}=\bm{\Xi}_1+\imath\bm{\Xi}_2,
\end{equation}
where $\bm{\Theta}_1,\bm{\Theta}_2,\bm{\Xi}_1,\bm{\Xi}_2\in\mathbb{R}^{N\times R}$ and $\bm{\Sigma}\in\mathbb{R}^{R\times R}$, and $\bm{U}\in\mathbb{C}^{N\times R}$ and $\bm{V}\in\mathbb{C}^{N\times R}$ satisfies $\bm{U}^*\bm{U}=\bm{V}^*\bm{V}=\bm{I}$. Then, by direct calculation,
\begin{equation}\label{eq:ThetaXi}
\bm{\Theta}\equiv\left[\begin{matrix}
\bm{\Theta}_1 & -\bm{\Theta}_2\cr \bm{\Theta}_2 & \bm{\Theta}_1
\end{matrix}\right]\in\mathbb{R}^{2N\times (2R)}, \qquad
\bm{\Xi}\equiv\left[\begin{matrix}
\bm{\Xi}_1 & -\bm{\Xi}_2\cr \bm{\Xi}_2 & \bm{\Xi}_1
\end{matrix}\right]\in\mathbb{R}^{2N\times (2R)}
\end{equation}
satisfy $\bm{\Theta}^T\bm{\Theta}=\bm{\Xi}^T\bm{\Xi}=\bm{I}$. Moreover, if we define
$\hat{\bm{\Omega}}=\left[\begin{matrix}
\hat{\bm{\Omega}}_1 & -\hat{\bm{\Omega}}_2\cr \hat{\bm{\Omega}}_2 & \hat{\bm{\Omega}}_1
\end{matrix}\right]$, then
\begin{equation}\label{eq:SVDrealOmega}
\hat{\bm{\Omega}}=
\bm{\Theta}\left[\begin{matrix}
\bm{\Sigma} & \cr & \bm{\Sigma}
\end{matrix}\right]
\bm{\Xi}^T
\end{equation}
is a singular value decomposition of the real matrix $\hat{\bm{\Omega}}$, and the singular values $\hat{\bm{\Omega}}$ are those of $\mathcal{G}\hat{\bm{y}}$, each repeated twice. Therefore,
\begin{equation}\label{eq:Freal}
\mathcal{F}\left(\hat{\bm{\omega}}\right)
=\|\mathcal{G}\hat{\bm{y}}\|_*=\|\bm{\Sigma}\|_*=\frac12\|\hat{\bm{\Omega}}\|_*.
\end{equation}
Define a linear operator $\mathcal{E}~:~\mathbb{R}^{4N-2}\mapsto\mathbb{R}^{2N\times 2N}$ by
$$
\mathcal{E}\left(\left[\begin{matrix}\bm{\alpha}\cr\bm{\beta}\end{matrix}\right]\right)
=\left[\begin{matrix}\mathcal{G}\bm{\alpha}&-\mathcal{G}\bm{\beta}\cr
\mathcal{G}\bm{\beta}&\mathcal{G}\bm{\alpha}\end{matrix}\right],
\quad\mbox{with}\quad\bm{\alpha},\bm{\beta}\in\mathbb{R}^{2N-1}.
$$
By \eqref{eq:Freal} and the definition of $\hat{\bm{\Omega}}$, we obtain $\mathcal{F}(\hat{\bm{\omega}})=\frac12\|\mathcal{E}\hat{\bm{\omega}}\|_*$. From convex analysis theory and $\hat{\Omega}=\mathcal{E}\hat{\omega}$, the subdifferential of $\mathcal{F}$ is given by
\begin{equation}\label{eq:subdiffF}
\partial\mathcal{F}(\hat{\bm{\omega}})=\frac12\mathcal{E}^*\partial\|\hat{\bm{\Omega}}\|_*.
\end{equation}
On the one hand, the adjoint $\mathcal{E}^*$ is given by, for any $\bm{\Delta}=\left[\begin{matrix}\bm{\Delta}_{11}&\bm{\Delta}_{12}\cr\bm{\Delta}_{21}&\bm{\Delta}_{22}\end{matrix}\right]\in\mathbb{R}^{2N\times 2N}$ with each block in $\mathbb{R}^{N\times N}$,
\begin{equation}\label{eq:adjE}
\mathcal{E}^*\bm{\Delta}=\left[\begin{matrix}\mathcal{G}^*(\bm{\Delta}_{11}+\bm{\Delta}_{22})\cr\mathcal{G}^*(\bm{\Delta}_{21}-\bm{\Delta}_{12})\end{matrix}\right].
\end{equation}
On the other hand, since \eqref{eq:SVDrealOmega} provides a singular value decomposition of $\hat{\bm{\Omega}}$,
\begin{equation}\label{eq:subdiffnuc}
\partial\|\hat{\bm{\Omega}}\|_*=\left\{\bm{\Theta}\bm{\Xi}^T+\bm{\Delta}~|~
\bm{\Theta}^T\bm{\Delta}=\bm{0},~\bm{\Delta}\bm{\Xi}=\bm{0},~\|\bm{\Delta}\|_2\leq1\right\}.
\end{equation}
Combining \eqref{eq:subdiffF}\eqref{eq:adjE}\eqref{eq:subdiffnuc} and  \eqref{eq:ThetaXi} yields the subdifferential of $\mathcal{F}$ at $\hat{\bm{\omega}}$
$$
\partial\mathcal{F}(\hat{\bm{\omega}})=\left\{\left[\begin{matrix}
\mathcal{G}^*\left(\bm{\Theta}_1\bm{\Xi}_1^T+\bm{\Theta}_2\bm{\Xi}_2^T+\frac{\bm{\Delta}_{11}+\bm{\Delta}_{22}}{2}\right)\cr
\mathcal{G}^*\left(\bm{\Theta}_2\bm{\Xi}_1^T-\bm{\Theta}_1\bm{\Xi}_2^T+\frac{\bm{\Delta}_{21}-\bm{\Delta}_{12}}{2}\right)
\end{matrix}\right]~\Big|~
\bm{\Delta}=\left[\begin{matrix}\bm{\Delta}_{11}&\bm{\Delta}_{12}\cr\bm{\Delta}_{21}&\bm{\Delta}_{22}\end{matrix}\right],~\bm{\Theta}^T\bm{\Delta}=\bm{0},~\bm{\Delta}\bm{\Xi}=\bm{0},~\|\bm{\Delta}\|_2\leq1
\right\}.
$$

We are now ready for the estimation of the Gaussian width. Let the set $\mathfrak{S}$ be a subset of the set of complex-valued vectors
\begin{equation}\label{eq:setS}
\mathfrak{S}=\left\{\mathcal{G}^*(\bm{U}\bm{V}^*+\bm{W})~|~\bm{U}^*\bm{W}=\bm{0},~\bm{W}\bm{V}=\bm{0},
~\|\bm{W}\|_2\leq 1\right\},
\end{equation}
where $\bm{U},\bm{V}$ are in \eqref{eq:SVDcomplexGy}. Then, it can be checked that
\begin{equation}\label{eq:GsubsetsubdiffF}
\mathfrak{H}\equiv \left\{\left[\begin{matrix}\bm{\alpha}\cr\bm{\beta}\end{matrix}\right]~\Big|~
\bm{\alpha}+\imath\bm{\beta}\in\mathfrak{S}\right\}
\subset \partial\mathcal{F}(\hat{\bm{\omega}}).
\end{equation}
Actually, for any $\bm{W}=\bm{\Delta}_1+\imath\bm{\Delta}_2$ satisfying $\bm{U}^*\bm{W}=0,\bm{W}\bm{V}=0$ and $\|\bm{W}\|_2\leq 1$, we choose $\bm{\Delta}=\left[\begin{matrix}\bm{\Delta}_1&-\bm{\Delta}_2\cr\bm{\Delta}_2&\bm{\Delta}_1\end{matrix}\right]$. Obviously, this choice of $\bm{\Delta}$ satisfies the constraints on $\bm{\Delta}$ in $\partial\mathcal{F}(\hat{\bm{\omega}})$. Furthermore, $\bm{U}\bm{V}^*+\bm{W}=(\bm{\Theta}_1\bm{\Xi}_1^T+\bm{\Theta}_2\bm{\Xi}_2^T+\bm{\Delta}_1) +\imath(\bm{\Theta}_2\bm{\Xi}_1^T+\bm{\Theta}_1\bm{\Xi}_2^T+\bm{\Delta}_2)$. Therefore, \eqref{eq:GsubsetsubdiffF} holds.

With the help of \eqref{eq:GsubsetsubdiffF}, we get
\begin{equation}\label{eq:widthest1.5}
\min_{\bm{\gamma}\in\mathfrak{T}_{\mathbb{R}}^{*}(\hat{\bm{y}})}\|\bm{\xi}-\bm{\gamma}\|_2
=\min_{\lambda\geq 0}\min_{\bm{\gamma}\in\partial\mathcal{F}(\hat{\bm{\omega}})}\|\bm{\xi}-\lambda\bm{\gamma}\|_2
\leq\min_{\lambda\geq 0}\min_{\bm{\gamma}\in\mathfrak{H}}\|\bm{\xi}-\lambda\bm{\gamma}\|_2.
\end{equation}
We then convert the real-valued vectors to complex-valued vectors by letting $\bm{g}=\bm{\xi}_1+\imath\bm{\xi}_2$ and $\bm{c}=\bm{\gamma}_1+\imath\bm{\gamma}_2$, where $\bm{\xi}_1$ and $\bm{\xi}_2$ are the first and second half of $\bm{\xi}$ respectively and so for $\bm{\gamma}_1$ and $\bm{\gamma}_2$. This leads to
$$
\min_{\bm{\gamma}\in\mathfrak{T}_{\mathbb{R}}^{*}(\hat{\bm{y}})}\|\bm{\xi}-\bm{\gamma}\|_2
\leq\min_{\lambda\geq 0}\min_{\bm{\gamma}\in\mathfrak{H}}\|\bm{\xi}-\lambda\bm{\gamma}\|_2
=\min_{\lambda\geq 0}\min_{\bm{c}\in\mathfrak{S}}\|\bm{g}-\lambda\bm{c}\|_2.
$$
Since $\mathcal{G}^*\mathcal{G}$ is the identity operator and $\mathcal{G}\mathcal{G}^*$ is an orthogonal projector, for any $\lambda\geq0$ and $\bm{c}\in\mathfrak{S}$,
\begin{equation}\label{eq:widthest2}
\begin{split}
\|\bm{g}-\lambda\bm{c}\|_2&=\|\mathcal{G}\bm{g}-\lambda\mathcal{G}\bm{c}\|_F
=\|\mathcal{G}\bm{g}-\lambda\mathcal{G}\mathcal{G}^*(\bm{U}\bm{V}^*+\bm{W})\|_F\cr
&=\left(\|\mathcal{G}\bm{g}-\lambda(\bm{U}\bm{V}^*+\bm{W})\|_F^2-\|\lambda(\mathcal{I}-\mathcal{G}\mathcal{G}^*)(\bm{U}\bm{V}^*+\bm{W})\|_F^2\right)^{1/2}\cr
&\leq\|\mathcal{G}\bm{g}-\lambda(\bm{U}\bm{V}^*+\bm{W})\|_F,
\end{split}
\end{equation}
where $\bm{W}$ satisfies the conditions in the definition of $\mathfrak{S}$ in \eqref{eq:setS}. Define two orthogonal projectors $\mathcal{P}_1$ and $\mathcal{P}_2$ in $\mathbb{C}^{N\times N}$ by
$$
\mathcal{P}_1\bm{X}=\bm{U}\bm{U}^*\bm{X}+\bm{X}\bm{V}\bm{V}^*-\bm{U}\bm{U}^*\bm{X}\bm{V}\bm{V}^*,\qquad
\mathcal{P}_2\bm{X}=(\bm{I}-\bm{U}\bm{U}^*)\bm{X}(\bm{I}-\bm{V}\bm{V}^*).
$$
Then, it can be easily checked that: $\mathcal{P}_1\bm{X}$ and $\mathcal{P}_2\bm{X}$ are orthogonal, $\bm{X}=\mathcal{P}_1\bm{X}+\mathcal{P}_2\bm{X}$, and
\begin{equation}\label{eq:P1P2}
\mathcal{P}_1\bm{U}\bm{V}^*=\bm{U}\bm{V}^*,\quad \mathcal{P}_2\bm{W}=\bm{0},\quad
\mathcal{P}_1\bm{W}=\bm{0},\quad \mathcal{P}_2\bm{W}=\bm{W},
\end{equation}
where $\bm{U},\bm{V},\bm{W}$ the same as those in \eqref{eq:setS}. We choose
$$
\lambda=\|\mathcal{P}_2(\mathcal{G}\bm{g})\|_2,\qquad
\bm{W}=\frac{1}{\lambda}\mathcal{P}_2(\mathcal{G}\bm{g}).
$$
Then, $\bm{W}$ satisfies constraints in \eqref{eq:setS}. This, together with \eqref{eq:widthest1.5}\eqref{eq:widthest2}\eqref{eq:P1P2}, implies
\begin{equation*}
\begin{split}
\min_{\bm{\gamma}\in\mathfrak{T}_{\mathbb{R}}^{*}(\hat{\bm{y}})}\|\bm{\xi}-\bm{\gamma}\|_2
&\leq \big\|\mathcal{G}\bm{g}-\|\mathcal{P}_2(\mathcal{G}\bm{g})\|_2\bm{U}\bm{V}^*-\mathcal{P}_2(\mathcal{G}\bm{g})\big\|_F
=\big\|\mathcal{P}_1(\mathcal{G}\bm{g})-\|\mathcal{P}_2(\mathcal{G}\bm{g})\|_2\bm{U}\bm{V}^*\big\|_F\cr
&\leq\|\mathcal{P}_1(\mathcal{G}\bm{g})\|_F+\|\mathcal{P}_2(\mathcal{G}\bm{g})\|_2\|\bm{U}\bm{V}^*\|_F
=\|\mathcal{P}_1(\mathcal{G}\bm{g})\|_F+\sqrt{R}\|\mathcal{P}_2(\mathcal{G}\bm{g})\|_2.
\end{split}
\end{equation*}
We will estimate both $\|\mathcal{P}_1(\mathcal{G}\bm{g})\|_F$ and $\|\mathcal{P}_2(\mathcal{G}\bm{g})\|_2$. For $\|\mathcal{P}_1(\mathcal{G}\bm{g})\|_F$, we have
\begin{equation*}
\begin{split}
\|\mathcal{P}_1(\mathcal{G}\bm{g})\|_F&=\|\bm{U}\bm{U}^*(\mathcal{G}\bm{g})+(\mathcal{G}\bm{g})\bm{V}\bm{V}^*-\bm{U}\bm{U}^*(\mathcal{G}\bm{g})\bm{V}\bm{V}^*\|_F
=\|\bm{U}\bm{U}^*(\mathcal{G}\bm{g})+(\bm{I}-\bm{U}\bm{U}^*)(\mathcal{G}\bm{g})\bm{V}\bm{V}^*\|_F\cr
&\leq \|\bm{U}\bm{U}^*(\mathcal{G}\bm{g})\|_F+\|(\bm{I}-\bm{U}\bm{U}^*)(\mathcal{G}\bm{g})\bm{V}\bm{V}^*\|_F
\leq \|\bm{U}\bm{U}^*(\mathcal{G}\bm{g})\|_F+\|(\mathcal{G}\bm{g})\bm{V}\bm{V}^*\|_F\cr
&\leq 2\sqrt{R}\|\mathcal{G}\bm{g}\|_2
\end{split}
\end{equation*}
where in the last line we have used the inequality
$$
\|\bm{U}\bm{U}^*(\mathcal{G}\bm{g})\|_F\leq\|\bm{U}\bm{U}^*\|_F\|\mathcal{G}\bm{g}\|_2\leq \sqrt{R}\|\mathcal{G}\bm{g}\|_2
$$
and similarly $\|(\mathcal{G}\bm{g})\bm{V}\bm{V}^*\|_F\leq \sqrt{R}\|\mathcal{G}\bm{g}\|_2$.
For $\|\mathcal{P}_2(\mathcal{G}\bm{g})\|_2$,
$$
\|\mathcal{P}_2(\mathcal{G}\bm{g})\|_2=\|(\bm{I}-\bm{U}\bm{U}^*)(\mathcal{G}\bm{g})(\bm{I}-\bm{V}\bm{V}^*)\|_2\leq\|\bm{I}-\bm{U}\bm{U}^*\|_2\|\mathcal{G}\bm{g}\|_2\|\bm{I}-\bm{V}\bm{V}^*\|_2
\leq\|\mathcal{G}\bm{g}\|_2.
$$
Altogether, we obtain
$$
\min_{\bm{\gamma}\in\mathfrak{T}_{\mathbb{R}}^{*}(\hat{\bm{y}})}\|\bm{\xi}-\bm{\gamma}\|_2
\leq 3\sqrt{R}\|\mathcal{G}\bm{g}\|_2,
$$
which together with \eqref{eq:widthest1} gives
$$
w(\mathfrak{T}_{\mathbb{R}}(\hat{\bm{y}})\cap\mathbb{S}_{\mathbb{R}}^{4N-3})
\leq 3\sqrt{R}\cdot\mathsf{E}(\|\mathcal{G}\bm{g}\|_2).
$$
\end{proof}

\subsection{Bound of $\mathsf{E}(\|\mathcal{G}\bm{g}\|_2)$}
The estimation of $\mathsf{E}(\|\mathcal{G}\bm{g}\|_2)$ plays an important role in proving Theorem \ref{thm:main} since it needed to give the tight bound of the Gaussian width  $w(\mathfrak{T}_{\mathbb{R}}(\hat{\bm{y}})\cap\mathbb{S}_{\mathbb{R}}^{4N-3})$. The following Theorem gives us a bound for  $\mathsf{E}(\|\mathcal{G}\bm{g}\|_2)$.
\begin{theorem}\label{lem:randomHankel}
Let $\bm{g}\in\mathbb{R}^{2N-1}$ be a random vector whose entries are i.i.d. Gaussian random variables with mean $0$ and variance $1$, or $\bm{g}\in\mathbb{C}^{2N-1}$ a random vector whose real part and imaginary part have i.i.d. Gaussian random entries with mean $0$ and variance $1$. Then,
$$
\mathsf{E}(\|\mathcal{G}\bm{g}\|_2)\leq C_1\ln N,
$$
where $C_1$ are some positive universal constants.
\end{theorem}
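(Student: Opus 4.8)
The plan is to recognize $\mathcal{G}\bm{g}=\sum_{j=0}^{2N-2}g_j\bm{E}_j$ as a \emph{matrix Gaussian series} with deterministic coefficient matrices $\bm{E}_j$ from \eqref{eq:Ej}, and to invoke a non-commutative Khintchine / matrix Bernstein bound, whose logarithmic dimension factor is exactly what produces an $O(\ln N)$ estimate. First I would dispose of the complex case: if $\bm{g}=\bm{g}_1+\imath\bm{g}_2$ with $\bm{g}_1,\bm{g}_2$ real i.i.d.\ standard Gaussian vectors, then $\|\mathcal{G}\bm{g}\|_2\le\|\mathcal{G}\bm{g}_1\|_2+\|\mathcal{G}\bm{g}_2\|_2$, so it suffices to bound $\mathsf{E}\|\mathcal{G}\bm{g}\|_2$ for real Gaussian $\bm{g}$ and absorb a factor $2$ into $C_1$. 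For real $\bm{g}$ each $\bm{E}_j$ is real and symmetric (the condition $k+l=j$ is symmetric in $k,l$), so $\mathcal{G}\bm{g}$ is an $N\times N$ real symmetric Gaussian series, and $\bm{E}_j\bm{E}_j^*=\bm{E}_j^*\bm{E}_j=\bm{E}_j^2$.

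The key computation is the variance parameter $v:=\bigl\|\sum_{j=0}^{2N-2}\bm{E}_j^2\bigr\|_2$. Since $[\bm{E}_j]_{kl}=K_j^{-1/2}$ when $k+l=j$ and $0$ otherwise, a direct multiplication shows $\bm{E}_j^2=K_j^{-1}\bm{P}_j$, where $\bm{P}_j$ is the diagonal coordinate projection onto the row-support $\{\max(0,j-N+1),\dots,\min(j,N-1)\}$ of $\bm{E}_j$. Hence $\sum_j\bm{E}_j^2$ is diagonal, and for $0\le k\le N-1$ the indices $j$ that contribute to its $(k,k)$ entry are exactly $j=k,k+1,\dots,k+N-1$, so
$$
\Bigl[\sum_{j}\bm{E}_j^2\Bigr]_{kk}=\sum_{j=k}^{N-1}\frac{1}{j+1}+\sum_{j=N}^{k+N-1}\frac{1}{2N-1-j}=(H_N-H_k)+(H_{N-1}-H_{N-1-k})\le H_N+H_{N-1},
$$
where $H_n=\sum_{i=1}^n 1/i$. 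Therefore $v\le 2H_N\le 2(1+\ln N)$. This is the step where the anti-diagonal normalization $1/\sqrt{K_j}$ is essential: without it one would have $v=\Theta(N)$ and the theorem would be false.

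Finally I would apply the matrix Gaussian series inequality (a form of the non-commutative Khintchine inequality of Lust-Piquard and Pisier; equivalently, the Gaussian case of matrix Bernstein): for real symmetric $d\times d$ matrices $\bm{A}_j$ and i.i.d.\ standard Gaussians $g_j$, $\mathsf{E}\bigl\|\sum_j g_j\bm{A}_j\bigr\|_2\le\sqrt{2\,v\,\ln d}$ with $v=\bigl\|\sum_j\bm{A}_j^2\bigr\|_2$. With $d=N$ and $v\le 2(1+\ln N)$ this gives $\mathsf{E}\|\mathcal{G}\bm{g}\|_2\le\sqrt{4(1+\ln N)\ln N}\le C_1\ln N$ for a universal constant $C_1$ and all $N\ge2$, which (after the factor-$2$ bookkeeping of the first paragraph) is the claim.

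The only real obstacle is conceptual: the estimate must be routed through the non-commutative structure. A naive $\epsilon$-net over the unit sphere together with the scalar Gaussian concentration of $\langle(\mathcal{G}\bm{g})\bm{u},\bm{v}\rangle$ — whose variance $\sum_j|\langle\bm{E}_j\bm{u},\bm{v}\rangle|^2$ is at most $v=O(\ln N)$ for unit $\bm{u},\bm{v}$ — only yields $\mathsf{E}\|\mathcal{G}\bm{g}\|_2=O(\sqrt{N\ln N})$, which is far too weak. If a self-contained argument is preferred to citing the Khintchine inequality, one can instead reprove the needed special case by the moment method: bound $\mathsf{E}\|\mathcal{G}\bm{g}\|_2^{2p}\le\mathsf{E}\,\mathrm{tr}\bigl((\mathcal{G}\bm{g})^{2p}\bigr)$, expand by the Gaussian Wick formula, group the resulting sum over closed length-$2p$ walks so that it is dominated by terms of order $N\cdot v^{p}\cdot(2p-1)!!$, and optimize with $p\asymp\ln N$. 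The remaining verification (the identity $\bm{E}_j^2=K_j^{-1}\bm{P}_j$ and the harmonic-sum bound above) is routine.
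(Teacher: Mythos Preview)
Your proposal is correct and takes a genuinely different route from the paper.

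The paper proves Theorem~\ref{lem:randomHankel} \emph{from scratch} by the moment method: it bounds $\mathsf{E}\|\mathcal{G}\bm{g}\|_2\le(\mathsf{E}\,\mathrm{tr}(\bm{M}^k))^{1/k}$ for $\bm{M}=\mathcal{G}\bm{g}$, expands the trace as a sum over closed walks $i_1\to\cdots\to i_k\to i_1$, groups the terms by the pattern of coincidences among the anti-diagonal indices $i_\ell+i_{\ell+1}$ (encoded via graphs $\mathfrak{F}_{i_1,\ldots,i_k}$ and label sets $\mathfrak{A},\mathfrak{B},\mathfrak{C}$), bounds each group using the harmonic-type sums $\sum_\ell 1/K_{c+\ell}^{s/2}$, and finally optimizes $k\asymp\ln N$. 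Your approach instead packages all of this combinatorics inside the non-commutative Khintchine / matrix Gaussian series inequality, reducing the entire proof to the single clean computation $v=\bigl\|\sum_j\bm{E}_j^2\bigr\|_2\le 2H_N=O(\ln N)$, after which $\mathsf{E}\|\mathcal{G}\bm{g}\|_2\le\sqrt{2v\ln(2N)}=O(\ln N)$ is immediate. Both arguments hinge on the same phenomenon---the $1/\sqrt{K_j}$ normalization makes the relevant ``variance'' logarithmic rather than linear in $N$---but the paper rediscovers the Khintchine machinery by hand (indeed, your final paragraph sketching the moment-method alternative with $\mathsf{E}\,\mathrm{tr}((\mathcal{G}\bm{g})^{2p})\lesssim N\,v^p\,(2p-1)!!$ is essentially the paper's strategy). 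Your route is shorter and more transparent; the paper's is self-contained and avoids citing external matrix-concentration literature.
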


 The proof of Theorem \ref{lem:randomHankel} is relatively complicated. In order to help the reader easily understand the proof, we begin with real case and introduce some ideas and lemmas first.  Assume $\bm{g}\in\mathbb{R}^{2N-1}$ has i.i.d standard Gaussian entries with mean $0$ and variance $1$. Notice that  $\mathcal{G}\bm{g}$ is symmetric. Therefore, for any even integer $k$, $(\mathrm{tr}\left(\mathcal{G}\bm{g}\right)^k)^{1/k}$  is the $k$-norm of vector of singular values, which implies $\|\mathcal{G}\bm{g}\|_2\leq(\mathrm{tr}\left(\mathcal{G}\bm{g}\right)^k)^{1/k}$. This together with Jensen's inequality,
\begin{equation}\label{bound1}
\mathsf{E}(\|\mathcal{G}\bm{g}\|_2)\leq \mathsf{E}\left((\mathrm{tr}\left(\mathcal{G}\bm{g}\right)^k)^{1/k}\right)\leq
\left(\mathsf{E}(\mathrm{tr}\left(\mathcal{G}\bm{g}\right)^k)\right)^{1/k}.
\end{equation}
Thus, in order to get an upper bound of $\mathsf{E}(\|\mathcal{G}\bm{g}\|_2)$, we estimate $\mathsf{E}\left(\mathrm{tr}\left(\left(\mathcal{G}\bm{g}\right)^k\right)\right)$. Denote $\bm{M}=\mathcal{G}\bm{g}$. It is easy to see that
\begin{equation}\label{eq:EMk}
\mathsf{E}(\mathrm{tr}(\bm{M}^k))=\sum_{0\leq i_1,i_2,\ldots,i_k\leq N-1} \mathsf{E}(M_{i_1i_2}M_{i_2i_3}\ldots M_{i_{k-1}i_k}M_{i_ki_1}).
\end{equation}
Therefore, we only need to estimate $\sum_{0\leq i_1,i_2,\ldots,i_k\leq N-1} \mathsf{E}(M_{i_1i_2}M_{i_2i_3}\ldots M_{i_{k-1}i_k}M_{i_ki_1})$.

To simplify the notation, we denote $i_{k+1}=i_1$. Notice that $M_{ij}=\frac{g_{i+j}}{\sqrt{K_{i+j}}}$, where $g_{i+j}$ is a random Gaussian variable and $K_j$  is defined in \eqref{eq:Ej}. Hence, $M_{i_\ell,i_{\ell+1}}=M_{i_{\ell'},i_{\ell'+1}}$ if and only if $i_\ell+i_{\ell+1}=i_{\ell'}+i_{\ell'+1}$. In order to utilize this property, we would like to introduce a graph for any given index $i_1,i_2,\ldots,i_k$ and its equivalent edges on the graph. More specifically, we construct graph $\mathfrak{F}_{i_1,i_2,\ldots,i_k}$ with nodes to be $i_1,i_2,\ldots,i_k$ and edges to be $(i_1,i_2), (i_2,i_3),\ldots,(i_{k-1},i_k),(i_k,i_1)$. Let the weight for the edge $(i_{\ell},i_{\ell+1})$ be $i_{\ell}+i_{\ell+1}$. The edges with the same weights are considered as an equivalent class. Obviously, $M_{i_{\ell},i_{\ell+1}}=M_{i_{\ell'},i_{\ell'+1}}$ if and only if $(i_{\ell},i_{\ell+1})$ and $(i_{\ell'},i_{\ell'+1})$ are in the same equivalent class. Assume there are $p$ equivalent classes of the edges of $\mathfrak{F}_{i_1,i_2,\ldots,i_k}$. These equivalent classes are indexed by $1,2,\ldots,p$ according to their order in the graph traversal $i_1\to i_2\to\ldots\to i_k\to i_1$. We associate the graph $\mathfrak{F}_{i_1,i_2,\ldots,i_k}$ a sequence $c_1c_2\ldots c_k$, where $c_j$ is the index of the equivalent class of the edge $(i_j,i_{j+1})$. We call $c_1c_2\ldots c_k$ the label for the equivalent classes of the graph $\mathfrak{F}_{i_1,i_2,\ldots,i_k}$.

The label for the equivalent classes of the graph $\mathfrak{F}_{i_1,i_2,\ldots,i_k}$ plays an important role in bounding $\mathsf{E}(\|\mathcal{G}\bm{g}\|_2)$. In order to help the reader understand this concept better, we give two specific examples here. For $N=6,k=6,i_1=1,i_2=4,i_3=1,i_4=3,i_5=1,i_6=4$, we have a corresponding graph and its label for the equivalent classes of the graph is $112211$. For $N=6,k=6, i_1=2,i_2=3,i_3=2,i_4=4,i_5=2,i_6=3$, the  label for the equivalent classes of the corresponding graph is $112211$ as well. Therefore, there may be several different index sequences $i_1i_2\ldots i_k$ that correspond to the same label for the equivalent classes of the corresponding graph. Let $\mathfrak{A}_{c_1c_2\ldots c_k}$ be the set of indices whose label of equivalent class of the corresponding graph is $c_1c_2\ldots c_k$, i.e.
\begin{equation}\label{eq:frakA}
\mathfrak{A}_{c_1c_2\ldots c_k}=\{i_1i_2\ldots i_k|\mbox{ the label for the equivalent class of the graph } \mathfrak{F}_{i_1,i_2,\ldots,i_k} \mbox{ is } c_1c_2\ldots c_k\}
\end{equation}

For given $c_1c_2\ldots c_k$, $\mathfrak{A}_{c_1c_2\ldots c_k}$ is a subset of $\{i_1i_2\ldots i_k|i_j\in\{0,1,\ldots,N-1\},~\forall~j=1,\ldots,k\}$. The following lemma gives us an estimate for the bound $\sum_{i_1i_2\ldots i_k\in\mathfrak{A}_{c_1c_2\ldots c_k}} \mathsf{E}(M_{i_1i_2}M_{i_2i_3}\ldots M_{i_{k-1}i_k}M_{i_ki_1})$.
\begin{lemma} Let $\zeta$ be the Riemann zeta function and $\mathfrak{A}_{c_1c_2\ldots c_k}$ be defined in \eqref{eq:frakA}. Define $B(s)=\ln (N+1)$ if $s=2$ and $B(s)=\zeta(s/2)\leq \pi^2/6$ for $s\geq 4$.
Then
\begin{equation}\label{eq:ECs1sp}
\sum_{i_1i_2\ldots i_k\in \mathfrak{A}_{c_1c_2\ldots c_k}}\mathsf{E}(M_{i_1i_2}M_{i_2i_3}\ldots M_{i_{k-1}i_k}M_{i_ki_1})
\leq N \prod_{\ell=1}^{p}B(s_\ell)(s_{\ell}-1)!!
\end{equation}
where $p$ is the number of equivalent classes shown in $c_1c_2\ldots c_k$, and $s_{\ell}$, $\ell=1,\ldots,p$, is the frequency of $\ell$ in $c_1c_2\ldots c_k$.
\end{lemma}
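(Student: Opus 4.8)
The plan is to bound the sum in \eqref{eq:ECs1sp} by first computing the Gaussian expectation and then counting the index tuples that realize a given label $c_1c_2\ldots c_k$. Recall $M_{i_\ell i_{\ell+1}} = g_{i_\ell + i_{\ell+1}}/\sqrt{K_{i_\ell+i_{\ell+1}}}$, and that two factors coincide exactly when their edges lie in the same equivalence class. So fix $i_1i_2\ldots i_k \in \mathfrak{A}_{c_1c_2\ldots c_k}$ and group the $k$ factors by class: class $\ell$ appears $s_\ell$ times, contributing $(g_{w_\ell})^{s_\ell}/K_{w_\ell}^{s_\ell/2}$, where $w_\ell$ is the common edge weight of that class. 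Since the $g$'s indexed by distinct weights are independent mean-zero Gaussians, the product expectation factorizes; and $\mathsf{E}[(g_{w_\ell})^{s_\ell}] = (s_\ell-1)!!$ when $s_\ell$ is even and $0$ when $s_\ell$ is odd. Hence the summand is $\prod_{\ell=1}^p (s_\ell - 1)!! \, / \, K_{w_\ell}^{s_\ell/2}$ when all $s_\ell$ are even (and $0$ otherwise), so it suffices to bound
$$
\sum_{i_1i_2\ldots i_k \in \mathfrak{A}_{c_1c_2\ldots c_k}} \prod_{\ell=1}^p \frac{1}{K_{w_\ell}^{s_\ell/2}} .
$$

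The combinatorial heart is to reparametrize the sum over tuples in $\mathfrak{A}_{c_1c_2\ldots c_k}$ by the weights $w_1, \ldots, w_p$ of the equivalence classes, together with a choice of the starting node $i_1$. The key structural observation: once $i_1$ is fixed and the traversal reads off classes $c_1, c_2, \ldots$, each edge $(i_j, i_{j+1})$ in class $c_j$ has weight $w_{c_j}$, which forces $i_{j+1} = w_{c_j} - i_j$; so walking along the cycle, every subsequent node is determined by $i_1$ and the sequence of weights $w_1,\ldots,w_p$. Thus an index tuple in $\mathfrak{A}_{c_1c_2\ldots c_k}$ is determined by $(i_1, w_1, \ldots, w_p)$ — giving at most $N$ choices for $i_1$ and then the sum over the $w_\ell$'s factorizes. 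Concretely, $\sum_{i_1i_2\ldots i_k \in \mathfrak{A}} \prod_\ell K_{w_\ell}^{-s_\ell/2} \le N \prod_{\ell=1}^p \sum_{w=0}^{2N-2} K_w^{-s_\ell/2}$. Now $K_w$ takes each value $m \in \{1, \ldots, N\}$ at most twice as $w$ ranges over $0, \ldots, 2N-2$ (value $m$ at $w=m-1$ and $w=2N-1-m$), so $\sum_{w} K_w^{-s/2} \le 2\sum_{m=1}^N m^{-s/2}$; for $s = 2$ this is $\le 2(1 + \ln N) \le$ a constant times $\ln(N+1)$, and for $s \ge 4$ it is $\le 2\zeta(s/2) \le 2\pi^2/6$. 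Up to absorbing the factors of $2$ into the definition of $B(s)$ as stated (or into $C_1$ downstream), this yields the claimed bound $N \prod_{\ell=1}^p B(s_\ell)(s_\ell-1)!!$.

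I would organize the write-up as: (i) the Gaussian moment computation and factorization, reducing to the deterministic sum; (ii) the "node-determination" lemma showing a tuple in $\mathfrak{A}_{c_1c_2\ldots c_k}$ is pinned down by $(i_1, w_1,\ldots,w_p)$; (iii) the factorized bound over weights and the harmonic/zeta estimate on $\sum_w K_w^{-s/2}$. The main obstacle is step (ii): one must argue carefully that different weight assignments to classes with repetitions never collapse, and that $i_1$ ranges over at most $N$ values — the subtlety is that not every choice of $(i_1, w_1, \ldots, w_p)$ produces a valid tuple with all entries in $\{0,\ldots,N-1\}$ and with the traversal actually inducing the label $c_1\ldots c_k$, but since we only need an \emph{upper} bound, dropping these constraints and overcounting is harmless. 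A minor point to watch: the factor $\sqrt{K_{w_\ell}}^{\,s_\ell}$ in the denominator is exactly $K_{w_\ell}^{s_\ell/2}$ because each of the $s_\ell$ factors in class $\ell$ carries one power of $1/\sqrt{K_{w_\ell}}$; this is where evenness of $s_\ell$ makes the exponent an integer matching the nonzero-moment case. The complex case follows verbatim since $\|\mathcal{G}\bm g\|_2$ for complex $\bm g$ is handled through the same real trace bound applied to the $2N\times 2N$ real block form, but for this lemma the stated argument already covers both by treating real and imaginary parts of each $g_{w_\ell}$ independently (each contributing an even-moment factor, with the normalization absorbed).
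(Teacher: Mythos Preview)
Your approach is essentially the paper's: both compute the Gaussian moment as $\prod_\ell (s_\ell-1)!!\,K_{w_\ell}^{-s_\ell/2}$, parametrize tuples in $\mathfrak{A}_{c_1c_2\ldots c_k}$ by $p+1$ free parameters (the starting node plus one parameter per equivalence class), and bound the resulting sum as a product of one-variable sums. The only difference is which free parameter you attach to class $\ell$: the paper uses the \emph{second vertex} $j_{\ell+1}\in\{0,\ldots,N-1\}$ of the first edge in that class, whereas you use the \emph{weight} $w_\ell\in\{0,\ldots,2N-2\}$. Since $w_\ell = k_\ell + j_{\ell+1}$ with $k_\ell\in\{0,\ldots,N-1\}$ determined by the earlier parameters, the weight actually ranges only over a length-$N$ window $\{k_\ell,\ldots,k_\ell+N-1\}$, and for any such window one has
\[
\sum_{j=c}^{c+N-1}\frac{1}{K_j^{s/2}}\;\le\;\sum_{m=1}^{N}\frac{1}{m^{s/2}}
\]
(split at $j=N-1$; the tail $\sum_{m=N-c}^{N-1}m^{-s/2}$ is dominated termwise by $\sum_{m=1}^{c}m^{-s/2}$ because $N-c+t\ge t+1$). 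This yields exactly $B(s_\ell)$ with no factor of $2$. Your version, summing $w_\ell$ over all of $\{0,\ldots,2N-2\}$, loses a factor of $2^p$ relative to the stated bound; as you note this is harmless for the downstream constant $C_1$, but with the window observation above your argument proves the lemma verbatim.
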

\begin{proof}
We begin with finding free indices for any $i_1,i_2,\ldots,i_k$ in the set $\mathfrak{A}_{c_1c_2\ldots c_k}$. Let $(j_1,j_2)$ be the first edge of the class $1$. Therefore, the weight of the first class is $j_1+j_2$. For convenience, we define $k_1(j_1)=j_1$. The first edge of the class $2$ must have a vertex $k_2(j_1,j_2)$, depending on $j_1$ and $j_2$, and a free vertex, denoted by $j_3$. The weight of the second class is $k_2(j_1,j_2)+j_3$. Similarly, the first edge in class $3$ has a vertex $k_3(j_1,j_2,j_3)$ and a free vertex $j_4$, and the weight is $k_3(j_1,j_2,j_3)+j_4$, and so on. Finally, the first edge in class $p$ has a vertex $k_{p}(j_1,j_2,\ldots,j_p)$ and a free vertex $j_{p+1}$, and the weight is $k_{p}(j_1,j_2,\ldots,j_p)+j_{p+1}$. Recall that the entry $M_{ij}$ is $\frac{g_{i+j}}{\sqrt{K_{i+j}}}$, where $g_{i+j}$ is a random Gaussian variable. Therefore, for any $i_1i_2\ldots i_k\in\mathfrak{A}_{c_1c_2\ldots c_k}$,
\begin{equation}\label{eq:Ei1ik2}
\mathsf{E}(M_{i_1i_2}M_{i_2i_3}\ldots M_{i_{k-1}i_k}M_{i_ki_1})
=\prod_{\ell=1}^{p} \frac{1}{K_{m_{\ell}}^{s_{\ell}/2}}\mathsf{E}\left(g_{m_\ell}^{s_{\ell}}\right),
\end{equation}
where $m_{\ell}=k_{\ell}(j_1,j_2,\ldots,j_{\ell})+j_{\ell+1}$. Therefore, it is non-vanishing if and only if $s_1,s_2,\ldots,s_p$ are all even. In these cases,
\begin{equation}\label{eq:Ei1ik}
\mathsf{E}(M_{i_1i_2}M_{i_2i_3}\ldots M_{i_{k-1}i_k}M_{i_ki_1})=\prod_{\ell=1}^{p} \frac{(s_{\ell}-1)!!}{K_{m_{\ell}}^{s_{\ell}/2}}.
\end{equation}
Summing \eqref{eq:Ei1ik} over $\mathfrak{A}_{c_1c_2\ldots c_k}$, we obtain
\begin{equation*}
\begin{split}
&\sum_{i_1i_2\ldots i_k\in \mathfrak{A}_{c_1c_2\ldots c_k}}\mathsf{E}(M_{i_1i_2}M_{i_2i_3}\ldots M_{i_{k-1}i_k}M_{i_ki_1})
\leq\sum_{j_1=0}^{N-1}\sum_{j_2=0}^{N-1}\ldots\sum_{j_p=0}^{N-1}\sum_{j_{p+1}=0}^{N-1}\prod_{\ell=1}^{p} \frac{(s_{\ell}-1)!!}{K_{m_{\ell}}^{s_{\ell}/2}}\cr
&=\sum_{j_1=0}^{N-1}\sum_{j_2=0}^{N-1}\left(\frac{(s_{1}-1)!!}{K_{k_1(j_1)+j_2}^{s_{1}/2}}
\sum_{j_3=0}^{N-1}\left(\frac{(s_{2}-1)!!}{K_{k_2(j_1,j_2)+j_3}^{s_{2}/2}}
\sum_{i_4=0}^{N-1}\left(\ldots\sum_{j_{p+1}=0}^{N-1}\frac{(s_{p}-1)!!}{K_{k_{p}(j_1,\ldots,j_{p})+j_{p+1}}^{s_{p}/2}}\right)\ldots\right)\right)
\end{split}
\end{equation*}
Since, for any $0\leq c\leq N-1$,
$$
\sum_{\ell=0}^{N-1}\frac{1}{K_{c+\ell}^{s/2}}\leq
\begin{cases}
1+1/2+1/3+\ldots 1/N\leq \ln (N+1)& s=2,\cr
1+1/2^{s/2}+\ldots+1/N^{s/2}\leq \zeta(s/2)& s=4,6,\ldots
\end{cases}
$$
where $\zeta$ is the Riemann zeta function. By defining $B(s)=\ln (N+1)$ if $s=2$ and $B(s)=\zeta(s/2)\leq \pi^2/6$ for $s\geq 4$, the desired result easily follows.
\end{proof}

The desired bound for $\mathsf{E}(\|\mathcal{G}\bm{g}\|_2)$ can be obtained if we know how many different sets of $\mathfrak{A}_{c_1c_2\ldots c_k}$  available in the set $\{i_1i_2\ldots i_k|i_j\in\{0,1,\ldots,N-1\},~\forall~j=1,\ldots,k\}$. Let $\mathfrak{B}_{s_1s_2\ldots s_p}$ be the set of all labels of $p$ equivalent classes with $\ell$-th class containing $s_{\ell}$ equivalent edges respectively, i.e.
\begin{equation}
\mathfrak{B}_{s_1s_2\ldots s_p}=\left\{c_1c_2\ldots c_p\big|\begin{array}{ll}&c_1c_2\ldots c_p \hbox{ is a valid label of equivalent classes in graph } \mathfrak{F}_{i1i2\ldots i_k} \\&\hbox{and there are } s_{\ell} \hbox{ $\ell$'s in the label } c_1c_2\ldots c_p \end{array}\right\}
\end{equation}
Let $\mathfrak{C}_p$ be the set of all possible set of all possible choice of $p$ positive even numbers $s_1,\ldots,s_p$ satisfying $s_1+s_2+\ldots+s_p=k$. Then
\begin{equation}\label{eq:Esumall}
\begin{split}
\mathsf{E}(\mathrm{tr}(\bm{M}^k))&=\sum_{0\leq i_1,i_2,\ldots,i_k\leq N-1} E(M_{i_1i_2}M_{i_2i_3}\ldots M_{i_{k-1}i_k}M_{i_ki_1})\cr
&\leq \sum_{p=1}^{k/2}\sum_{s_1\ldots s_p\in\mathfrak{C}_{p}}\sum_{c_1c_2\ldots c_k\in\mathfrak{B}_{s_1s_2\ldots s_p}}\sum_{i_1i_2\ldots i_k\in \mathfrak{A}_{c_1c_2\ldots c_k}}\mathsf{E}(M_{i_1i_2}M_{i_2i_3}\ldots M_{i_{k-1}i_k}M_{i_ki_1})\cr
\end{split}
\end{equation}
By bounding the cardinality of $\mathfrak{B}_{s_1s_2\ldots s_p}$ and $\mathfrak{C}_p$, we can derive the bound $\mathsf{E}(\mathrm{tr}(\bm{M}^k))$ hence $\mathsf{E}(\|\mathcal{G}\bm{g}\|_2)$ for the real case. The complex case can be proved by directly using the results for the real case. Now, we are in position to prove Theorm \ref{lem:randomHankel}.

\begin{proof}[ Proof of Theorem \ref{lem:randomHankel}]
 Following \eqref{eq:Esumall}, we need to count the cardinality of $\mathfrak{B}_{s_1s_2\ldots s_p}$. For any $c_1c_2\ldots c_k\in\mathfrak{B}_{s_1s_2\ldots s_p}$, we must have $c_1=1$. Therefore, there are $k-1 \choose s_1-1$ choices of the positions of remaining $1$'s in $c_1c_2\ldots c_k$. Once positions for $1$'s are fixed, the position of the first $2$ has to be the first available slot, we have $k-s_1-1 \choose s_2-1$ choices for the positions of remaining $2$'s, and so on.  Thus,
\begin{equation*}
\begin{split}
|\mathfrak{B}_{s_1s_2\ldots s_p}|&\leq{k-1 \choose s_1-1}\cdot{k-s_1-1 \choose s_2-1}\cdot\ldots\cdot
{k-s_1-\ldots-s_{p-1}-1 \choose s_p-1}\cr
&=\frac{(k-1)(k-2)\ldots(k-s_1+1)}{(s_1-1)!}\frac{(k-s_1-1)\ldots(k-s_1-s_2+1)}{(s_2-1)!}
\ldots 1\cr
&=\frac{(k-1)!}{\prod_{\ell=1}^{p}(s_{\ell}-1)!\prod_{\ell=1}^{p-1}(k-s_1-\ldots-s_{\ell})},
\end{split}
\end{equation*}
which together with \eqref{eq:ECs1sp} implies, for any $s_1s_2\ldots s_p\in\mathfrak{C}_p$,
\begin{equation}\label{eq:EBA}
\begin{split}
&\sum_{c_1c_2\ldots c_k\in\mathfrak{B}_{s_1s_2\ldots s_p}}\sum_{i_1i_2\ldots i_k\in \mathfrak{A}_{c_1c_2\ldots c_k}}\mathsf{E}(M_{i_1i_2}M_{i_2i_3}\ldots M_{i_{k-1}i_k}M_{i_ki_1})\cr
\leq& N \frac{(k-1)!}{\prod_{\ell=1}^{p}(s_\ell-2)!!\prod_{\ell=1}^{p-1}(k-s_1-\ldots-s_{\ell})}\prod_{\ell=1}^{p}B(s_\ell)
\end{split}
\end{equation}
Summing \eqref{eq:EBA} over $\mathfrak{C}_{p}$ yields
\begin{equation}\label{eq:EBAC}
\begin{split}
&\sum_{s_1\ldots s_p\in\mathfrak{C}_{p}}\sum_{c_1c_2\ldots c_k\in\mathfrak{B}_{s_1s_2\ldots s_p}}\sum_{i_1i_2\ldots i_k\in \mathfrak{A}_{c_1c_2\ldots c_k}}\mathsf{E}(M_{i_1i_2}M_{i_2i_3}\ldots M_{i_{k-1}i_k}M_{i_ki_1})\cr
\leq& N(k-1)! \sum_{{s_1\ldots s_p\in \mathfrak{C}_{p}}}\frac{\prod_{\ell=1}^{p}B(s_\ell)}{\prod_{\ell=1}^{p}(s_\ell-2)!!\prod_{\ell=1}^{p-1}(k-s_1-\ldots-s_{\ell})}
\end{split}
\end{equation}
Let us estimate the sum in the last line. Let $s$ be the number of $2$'s in $s_1s_2\ldots s_p$. Then, \begin{equation}\label{eq:prodBs}
\prod_{\ell=1}^{p}B(s_\ell)\leq \ln^s(N+1)\left(\frac{\pi^2}{6}\right)^{p-s}.
\end{equation}
Since each $s_1,\ldots,s_p\geq 2$ and there are $p-s$ terms greater than $4$ among them, we have
\begin{equation}\label{eq:prodBs1}
\prod_{\ell=1}^{p}(s_\ell-2)!!\geq 2^{p-s}
\end{equation}
and $k-s_1-\ldots-s_{\ell}=s_{\ell+1}+\ldots+s_p\geq2(p-\ell)$, which implies
\begin{equation}\label{eq:prodBs2}
\prod_{\ell=1}^{p-1}(k-s_1-\ldots-s_{\ell})\geq\prod_{\ell=1}^{p-1}2(p-\ell)=2^{p-1}(p-1)!.
\end{equation}
There are $p\choose s$ choices of the positions of the $s$ $2$'s. Moreover, once the $s$ $2$'s in $s_1s_2\ldots s_p$ are chosen, there are at most
$$
\left(\frac{k}{2}-s\right)\cdot\left(\frac{k}{2}-s-1\right)\cdot\ldots\cdot
\left(\frac{k}{2}-s-(p-s+1)\right)\leq\left(\frac{k}{2}\right)^{p-s}
$$
choices of the remaining $p-s$ $s_j$'s. Altogether,
\begin{equation}\label{eq:EBAC2}
\begin{split}
&\sum_{s_1\ldots s_p\in\mathfrak{C}_{p}}\sum_{c_1c_2\ldots c_k\in\mathfrak{B}_{s_1s_2\ldots s_p}}\sum_{i_1i_2\ldots i_k\in \mathfrak{A}_{c_1c_2\ldots c_k}}\mathsf{E}(M_{i_1i_2}M_{i_2i_3}\ldots M_{i_{k-1}i_k}M_{i_ki_1})\cr
\leq& N(k-1)!\sum_{s=0}^{p}{p\choose s}\left(\frac{k}{2}\right)^{p-s}\ln^s(N+1)\left(\frac{\pi^2}{6}\right)^{p-s}\frac{1}{2^{p-s}2^{p-1}(p-1)!}\cr
=&2N(k-1)!\frac{1}{(p-1)!}\sum_{s=0}^{p}{p\choose s}\left(\frac{k}{2}\right)^{p-s}\ln^s(N+1)\left(\frac{\pi^2}{6}\right)^{p-s}\frac{1}{4^{p-s}2^s}\cr
=&\frac{2N(k-1)!}{(p-1)!}\sum_{s=0}^{p}{p\choose s}\left(\frac{\pi^2k}{48}\right)^{p-s}\left(\frac{\ln (N+1)}{2}\right)^s\cr
=&2N(k-1)!\frac{\left(\frac{\pi^2}{48}k+\frac{\ln (N+1)}{2}\right)^{p}}{(p-1)!}
\end{split}
\end{equation}

Finally, \eqref{eq:EBAC2} is summed over all possible $p$ and we obtain
\begin{equation}\label{eq:Esumal2}
\begin{split}
\mathsf{E}(\mathrm{tr}(\bm{M}^k))&=\sum_{i_1,i_2,\ldots,i_k} \mathsf{E}(M_{i_1i_2}M_{i_2i_3}\ldots M_{i_{k-1}i_k}M_{i_ki_1})\cr
&\leq \sum_{p=1}^{k/2}\sum_{s_1\ldots s_p\in\mathfrak{C}_{p}}\sum_{c_1c_2\ldots c_k\in\mathfrak{B}_{s_1s_2\ldots s_p}}\sum_{i_1i_2\ldots i_k\in \mathfrak{A}_{c_1c_2\ldots c_k}}\mathsf{E}(M_{i_1i_2}M_{i_2i_3}\ldots M_{i_{k-1}i_k}M_{i_ki_1})\cr
&\leq 2N(k-1)!\sum_{p=1}^{k/2}\frac{\left(\frac{\pi^2}{48}k+\frac{\ln (N+1)}{2}\right)^{p}}{(p-1)!}
\end{split}
\end{equation}
By using the fact that, for any $A>0$,
$$
\sum_{p=1}^{k/2}\frac{A^{p}}{(p-1)!}
=A\left(1+A+\frac{A^2}{2!}+\ldots+\frac{A^{k/2-1}}{(k/2-1)!}\right)\leq A e^A,
$$
\eqref{eq:Esumal2} is rearranged into
\begin{equation*}
\begin{split}
\mathsf{E}(\mathrm{tr}(\bm{M}^k))&\leq 2N(k-1)!\left(\frac{\pi^2}{48}k+\frac{\ln (N+1)}{2}\right)e^{\frac{\pi^2}{48}k+\frac{\ln (N+1)}{2}}
=2N\sqrt{N+1}(k-1)!\left(\frac{\pi^2}{48}k+\frac{\ln (N+1)}{2}\right)e^{\frac{\pi^2}{48}k}\cr
&\leq 2(N+1)^{\frac32}k^k\left(\frac{\pi^2}{48}+\frac{\ln (N+1)}{2k}\right)e^{\frac{\pi^2}{48}k}.
\end{split}
\end{equation*}
Let $k$ be the smallest even integer greater than $\frac{24}{\pi^2}\ln (N+1)$. Then using $\|\bm{M}\|_2\leq(\mathrm{tr}(\bm{M}^k))^{1/k}$ lead to
\begin{equation*}
\begin{split}
\mathsf{E}(\|\bm{M}\|_2)&\leq \mathsf{E}((\mathrm{tr}(\bm{M}^k))^{1/k})\leq \left(\mathsf{E}(\mathrm{tr}(\bm{M}^k))\right)^{1/k}
\leq (2(N+1)^{\frac32})^{1/k}k\left(\frac{\pi^2}{48}+\frac{\ln (N+1)}{2k}\right)^{1/k}e^{\frac{\pi^2}{48}}
\cr
&\leq 2^{\frac{\pi^2}{24\ln (N+1)}}\cdot e^{\frac{\pi^2}{16}}\cdot\frac{24}{\pi^2}\ln (N+1)
\cdot \left(\frac{\pi^2}{24}\right)^{\frac{\pi^2}{24\ln (N+1)}}\cdot e^{\frac{\pi^2}{48}}
\leq C_1\ln N,
\end{split}
\end{equation*}
where the constant $C_1$ is some universal constant.

Next, we estimate the complex case. In this case, $\bm{g}\in\mathbb{C}^{2N-1}$, where both its real part and imaginary part have i.i.d. Gaussian entries. Write $\bm{g}=\bm{\xi}+\imath\bm{\eta}$, where $\bm{\xi},\bm{\eta}\in\mathbb{R}^{2N-1}$ are real-valued random Gaussian vectors. From the real-valued case above, we derive
$$
\mathsf{E}(\|\mathcal{G}\bm{\xi}\|_2)\leq C_1\ln N,\quad
\mathsf{E}(\|\mathcal{G}\bm{\eta}\|_2)\leq C_1\ln N.
$$
Therefore,
$$
\mathsf{E}(\|\mathcal{G}\bm{g}\|_2)=\mathsf{E}(\|\mathcal{G}\bm{\xi}+\imath\mathcal{G}\bm{\eta}\|_2)
\leq \mathsf{E}(\|\mathcal{G}\bm{\xi}\|_2)+\mathsf{E}(\|\mathcal{G}\bm{\eta}\|_2)
\leq 2C_1\ln N.
$$
\end{proof}

\subsection{Proof of Theorem \ref{thm:main}}
With Lemmas \ref{lem:tangentcone}, \ref{lem:Gaussianwidth}, \ref{lem:boundGaussianWidth}, and Theorem \ref{lem:randomHankel} in hand, we are in position to prove Theorem \ref{thm:main}.
\begin{proof}[Proof of Theorem \ref{thm:main}]
Since \eqref{eq:mincomplex} is equivalent to \eqref{eq:min} by the relation $\bm{y}=\mathcal{D}\bm{x}$, we only need to prove that $\hat{\bm{y}}=\tilde{\bm{y}}$ for noise free data ( $\|\hat{\bm{y}}-\tilde{\bm{y}}\|_2\leq 2\delta/\epsilon$ for noisy data) with dominant probability. According to Lemma \ref{lem:tangentcone}, we only need to prove \eqref{eq:NullSpace}. By Lemma \ref{lem:Gaussianwidth},
$$
\mathsf{P}\left(\min_{\bm{z}\in\mathfrak{T}(\hat{\bm{y}})\cap\mathbb{S}_c^{2N-2}}\|\mathcal{B}\bm{z}\|_2\geq\epsilon\right)\geq 1-2e^{-\frac12\left(\lambda_{M}-w(\mathfrak{T}_{\mathbb{R}}(\hat{\bm{y}})\cap\mathbb{S}_{\mathbb{R}}^{4N-3})-\frac{\epsilon}{\sqrt{2}}\right)^2}.
$$
Lemma \ref{lem:boundGaussianWidth}, Theorem \ref{lem:randomHankel}, and the inequality $\lambda_M\geq \frac{M}{\sqrt{M+1}}$ imply that
$$\lambda_{M}-w(\mathfrak{T}_{\mathbb{R}}(\hat{\bm{y}})\cap\mathbb{S}_{\mathbb{R}}^{4N-3})-\frac{\epsilon}{\sqrt{2}}\geq \frac{M}{\sqrt{M+1}}-3C_1\sqrt{R}\ln N-\frac{\epsilon}{\sqrt{2}}\geq \sqrt{M-1}-3C_1\sqrt{R}\ln N-\frac{\epsilon}{\sqrt{2}}.$$
When  $M \geq (6C_1\sqrt{R}\ln N+\sqrt{2}\epsilon)^2+1$, we can easily get $\mathsf{P}\left(\min_{\bm{z}\in\mathfrak{T}(\hat{\bm{y}})\cap\mathbb{S}_c^{2N-2}}\|\mathcal{B}\bm{z}\|_2\geq\epsilon\right)\geq 1-2e^{-\frac{M-1}{8}}$. We get the desired result.
\end{proof}


\section{Extension to Structured Low-Rank Matrix Reconstruction}\label{secMatrices}
In this section, we extend our results to low-rank Hankel matrix reconstruction and low-rank Toeplitz matrix reconstruction from their Gaussian measurements.

Since the proof of Theorem \ref{thm:main} does not use the specific property that $\hat{\bm{y}}$ is an exponential signal, Theorem \ref{thm:main} holds true for any low-rank Hankel matrices. We have the following corollary, which reads that any Hankel matrix of size $N\times N$ and rank $R$ can be recovered exactly from its $O(R\ln^2N)$ Gaussian measurements, and this reconstruction is robust to noise.
\begin{corollary}[Low-Rank Hankel Matrix Reconstruction]
Let $\hat{\bm{H}}\in\mathbb{C}^{N\times N}$ be a given Hankel matrix with rank $R$. Let $\hat{\bm{x}}\in\mathbb{C}^{2N-1}$ be satisfying $\hat{x}_{i+j}=\hat{H}_{ij}$ for $0\leq i,j\leq N-1$. Let $\mathcal{A}=\mathcal{B}\mathcal{D}\in\mathbb{C}^{M\times (2N-1)}$, where $\mathcal{B}\in\mathbb{C}^{M\times (2N-1)}$ is a random matrix whose real and imaginary parts are i.i.d. Gaussian with mean $0$ and variance $1$, $\mathcal{D}\in\mathbb{R}^{(2N-1)\times (2N-1)}$ is the same as defined in Theorem \ref{thm:main}. Then, there exists a universal constant $C_1>0$ such that, for any $\epsilon>0$, if
$$
M \geq (C_1\sqrt{R}\ln N+\sqrt{2}\epsilon)^2+1,
$$
then, with probability at least $1-2e^{-\frac{M-1}{8}}$, we have
\begin{enumerate}
\item[(a)]
$\bm{H}(\tilde{\bm{x}})=\hat{\bm{H}}$, where $\tilde{\bm{x}}$ is the unique solution of
$$
\min_{\bm{x}}\|\bm{H}(\bm{x})\|_*\quad\mbox{subject to}\quad \mathcal{A}\bm{x}=\bm{b}
$$
with $\bm{b}=\mathcal{A}\hat{\bm{x}}$;
\item[(b)]
$\|\bm{H}(\tilde{\bm{x}})-\hat{\bm{H}})\|_F\leq 2\delta/\epsilon$, where $\tilde{\bm{x}}$ is the unique solution of
$$
\min_{\bm{x}}\|\bm{H}(\bm{x})\|_*\quad\mbox{subject to}\quad \|\mathcal{A}\bm{x}-\bm{b}\|_2\leq\delta
$$
with $\|\bm{b}-\mathcal{A}\hat{\bm{x}}\|_2\leq\delta$.
\end{enumerate}
\end{corollary}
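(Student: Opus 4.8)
The plan is to observe that the proof of Theorem~\ref{thm:main} given in Section~\ref{secProof} never uses the exponential structure of $\hat{\bm{x}}$; the only property of $\hat{\bm{y}}$ exploited anywhere in Lemmas~\ref{lem:tangentcone}, \ref{lem:Gaussianwidth}, \ref{lem:boundGaussianWidth} and Theorem~\ref{lem:randomHankel} is that $\mathcal{G}\hat{\bm{y}}$ is a matrix of rank $R$ (this enters only through the SVD \eqref{eq:SVDcomplexGy} used in the width estimate). So the corollary should follow by rerunning that argument with $\hat{\bm{y}}=\mathcal{D}\hat{\bm{x}}$, once we check that this choice reproduces $\hat{\bm{H}}$.

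First I would record the bookkeeping. Since $\mathcal{D}$ multiplies the $j$-th coordinate by $\sqrt{K_j}$ while $\bm{E}_j$ carries the reciprocal factor $1/\sqrt{K_j}$ on the antidiagonal $\{k+l=j\}$, for every $\bm{x}$ we have $\bm{H}(\bm{x})=\mathcal{G}(\mathcal{D}\bm{x})=\mathcal{G}\bm{y}$ with $\bm{y}=\mathcal{D}\bm{x}$; in particular $\mathcal{G}\hat{\bm{y}}=\bm{H}(\hat{\bm{x}})=\hat{\bm{H}}$, which has rank $R$ by hypothesis. Moreover $\mathcal{G}^*\mathcal{G}$ is the identity, so $\mathcal{G}$ is an isometry and $\|\bm{H}(\bm{x})-\bm{H}(\bm{x}')\|_F=\|\bm{y}-\bm{y}'\|_2$. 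Consequently, after the substitution $\bm{y}=\mathcal{D}\bm{x}$ and with $\mathcal{B}=\mathcal{A}\mathcal{D}^{-1}$, the two minimization problems in the corollary are exactly \eqref{eq:mincomplex} and \eqref{eq:mincomplexnoisy}, and $\mathcal{B}$ has the required i.i.d.\ complex Gaussian entries.

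Then I would invoke the chain of estimates verbatim. Lemma~\ref{lem:boundGaussianWidth} and Theorem~\ref{lem:randomHankel} give $w(\mathfrak{T}_{\mathbb{R}}(\hat{\bm{y}})\cap\mathbb{S}_{\mathbb{R}}^{4N-3})\le 3C_1\sqrt{R}\ln N$, using only $\mathrm{rank}(\mathcal{G}\hat{\bm{y}})=R$; Lemma~\ref{lem:Gaussianwidth} together with $\lambda_M\ge M/\sqrt{M+1}\ge\sqrt{M-1}$ then yields $\min_{\bm{z}\in\mathfrak{T}(\hat{\bm{y}})\cap\mathbb{S}_c^{2N-2}}\|\mathcal{B}\bm{z}\|_2\ge\epsilon$ with probability at least $1-2e^{-(M-1)/8}$ whenever $M\ge(C_1\sqrt{R}\ln N+\sqrt{2}\epsilon)^2+1$ (the absolute constant in the $\ln N$ term is absorbed exactly as in the proof of Theorem~\ref{thm:main}). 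On that event Lemma~\ref{lem:tangentcone} gives $\tilde{\bm{y}}=\hat{\bm{y}}$ for the noise-free problem and $\|\tilde{\bm{y}}-\hat{\bm{y}}\|_2\le 2\delta/\epsilon$ for the noisy one — and uniqueness, since the bound applies to \emph{every} minimizer. Translating back through $\bm{y}=\mathcal{D}\bm{x}$ and the isometry of $\mathcal{G}$ gives $\bm{H}(\tilde{\bm{x}})=\mathcal{G}\tilde{\bm{y}}=\mathcal{G}\hat{\bm{y}}=\hat{\bm{H}}$ in case (a) and $\|\bm{H}(\tilde{\bm{x}})-\hat{\bm{H}}\|_F=\|\mathcal{G}(\tilde{\bm{y}}-\hat{\bm{y}})\|_F=\|\tilde{\bm{y}}-\hat{\bm{y}}\|_2\le 2\delta/\epsilon$ in case (b). There is no real obstacle beyond this bookkeeping; the only point worth stating explicitly is that the Gaussian-width estimate is structure-agnostic, so the passage from the theorem to the corollary is immediate.
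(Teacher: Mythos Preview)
Your proposal is correct and follows exactly the paper's own route: the paper simply remarks that the proof of Theorem~\ref{thm:main} nowhere uses the exponential structure of $\hat{\bm{x}}$, only that $\mathcal{G}\hat{\bm{y}}$ has rank $R$, and declares the corollary immediate. Your write-up supplies the bookkeeping (the identity $\bm{H}(\bm{x})=\mathcal{G}(\mathcal{D}\bm{x})$ and the isometry $\|\mathcal{G}\bm{y}\|_F=\|\bm{y}\|_2$ needed to pass from the $\ell_2$ bound on $\tilde{\bm{y}}-\hat{\bm{y}}$ to the Frobenius bound on $\bm{H}(\tilde{\bm{x}})-\hat{\bm{H}}$) that the paper leaves implicit.
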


Moreover, Theorem \ref{thm:main} can be extended to the reconstruction of low-rank Toeplitz matrix from its Gaussian measurements. Let $\hat{\bm{T}}\in\mathbb{C}^{N\times N}$ be a Toeplitz matrix. Let $\hat{\bm{x}}\in\mathbb{C}^{2N-1}$ be a vector satisfying $\hat{x}_{N-1+(i-j)}=\hat{T}_{i,j}$ for $0\leq i,j\leq  N-1$. Let $\bm{P}\in\mathbb{C}^{N\times N}$ be an anti-diagonal matrix with anti-diagonals of $1$. Then, it is easy to check that $\hat{\bm{T}}=\bm{H}(\hat{\bm{x}})\bm{P}$. Thus, we define a linear operator $\bm{T}$ that maps a vector in $\mathbb{C}^ {2N-1}$ to a $N\times N$ Toeplitz matrix by $\bm{T}(\bm{x})=\bm{H}(\bm{x})\bm{P}$. Since $\bm{P}$ is a unitary matrix, one has $\|\bm{T}(\bm{x})\|_*=\|\bm{H}(\bm{x})\bm{P}\|_*=\|\bm{H}(\bm{x})\|_*$. Therefore, the above corollary can be adapted to low-rank Toeplitz matrices. We obtain the following corollary, which states that any Toeplitz matrix of size $N\times N$ and rank $R$ can be recovered exactly from its $O(R\ln^2N)$ Gaussian measurements, and this reconstruction is robust to noise.

\begin{corollary}[Low-Rank Toeplitz Matrix Reconstruction]
Let $\hat{\bm{T}}\in\mathbb{C}^{N\times N}$ be a given Toeplitz matrix with rank $R$. Let $\hat{\bm{x}}\in\mathbb{C}^{2N-1}$ be the vector satisfying $\hat{x}_{N-1+(i-j)}=\hat{T}_{i,j}$ for $0\leq i,j\leq N-1$. Let $\mathcal{A}=\mathcal{B}\mathcal{D}\in\mathbb{C}^{M\times (2N-1)}$, where $\mathcal{B}\in\mathbb{C}^{M\times (2N-1)}$ is a random matrix whose real and imaginary parts are i.i.d. Gaussian with mean $0$ and variance $1$, $\mathcal{D}\in\mathbb{R}^{(2N-1)\times (2N-1)}$ is the same as defined in Theorem \ref{thm:main}. Then, there exists a universal constant $C_1>0$ such that, for any $\epsilon>0$, if
$$
M \geq (C_1\sqrt{R}\ln N+\sqrt{2}\epsilon)^2+1,
$$
then, with probability at least $1-2e^{-\frac{M-1}{8}}$, we have
\begin{enumerate}
\item[(a)]
$\bm{T}(\tilde{\bm{x}})=\hat{\bm{T}}$, where $\tilde{\bm{x}}$ is the unique solution of
$$
\min_{\bm{x}}\|\bm{T}(\bm{x})\|_*\quad\mbox{subject to}\quad \mathcal{A}\bm{x}=\bm{b}
$$
with $\bm{b}=\mathcal{A}\hat{\bm{x}}$;
\item[(b)]
$\|\bm{T}(\tilde{\bm{x}})-\hat{\bm{T}})\|_F\leq 2\delta/\epsilon$, where $\tilde{\bm{x}}$ is the unique solution of
$$
\min_{\bm{x}}\|\bm{T}(\bm{x})\|_*\quad\mbox{subject to}\quad \|\mathcal{A}\bm{x}-\bm{b}\|_2\leq\delta
$$
with $\|\bm{b}-\mathcal{A}\hat{\bm{x}}\|_2\leq\delta$.
\end{enumerate}
\end{corollary}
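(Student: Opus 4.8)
The plan is to reduce the Toeplitz statement to the already-established Low-Rank Hankel Matrix Reconstruction corollary (which itself is immediate from Theorem \ref{thm:main}, since the proof of that theorem never used the exponential structure of $\hat{\bm{x}}$, only that $\bm{H}(\hat{\bm{x}})$ has rank $R$). The bridge is the factorization $\bm{T}(\bm{x}) = \bm{H}(\bm{x})\bm{P}$ noted just above, where $\bm{P}\in\mathbb{C}^{N\times N}$ is the reversal permutation matrix (ones on the anti-diagonal). First I would record the elementary properties of $\bm{P}$: it is a real orthogonal matrix, hence unitary over $\mathbb{C}$, and an involution ($\bm{P}^*=\bm{P}=\bm{P}^{-1}$); therefore right-multiplication by $\bm{P}$ preserves both the nuclear norm and the Frobenius norm, and multiplication by the invertible matrix $\bm{P}$ preserves rank. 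In particular, from $\hat{\bm{T}} = \bm{H}(\hat{\bm{x}})\bm{P}$ we get $\mathrm{rank}(\bm{H}(\hat{\bm{x}})) = \mathrm{rank}(\hat{\bm{T}}) = R$, so $\bm{H}(\hat{\bm{x}})$ is precisely a rank-$R$ Hankel matrix of the type covered by the Hankel corollary.

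Next I would observe that the two optimization programs are literally the same problem in the variable $\bm{x}$. Indeed, for every $\bm{x}\in\mathbb{C}^{2N-1}$ we have $\|\bm{T}(\bm{x})\|_* = \|\bm{H}(\bm{x})\bm{P}\|_* = \|\bm{H}(\bm{x})\|_*$, so the Toeplitz objective coincides with the Hankel objective as a function of $\bm{x}$, while the constraints $\mathcal{A}\bm{x}=\bm{b}$ (resp. $\|\mathcal{A}\bm{x}-\bm{b}\|_2\le\delta$) are imposed directly on $\bm{x}$ and are therefore identical. Hence the two programs have the same feasible set and the same objective, so they have the same solution set; in particular uniqueness of the minimizer transfers, and the unique minimizer $\tilde{\bm{x}}$ of the Toeplitz program equals that of the Hankel program.

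Then I would invoke the Low-Rank Hankel Matrix Reconstruction corollary with the same $\mathcal{A}=\mathcal{B}\mathcal{D}$ and the same threshold $M \ge (C_1\sqrt{R}\ln N+\sqrt{2}\epsilon)^2+1$: on an event of probability at least $1-2e^{-(M-1)/8}$ it gives $\bm{H}(\tilde{\bm{x}}) = \bm{H}(\hat{\bm{x}})$ in the noiseless case and $\|\bm{H}(\tilde{\bm{x}})-\bm{H}(\hat{\bm{x}})\|_F \le 2\delta/\epsilon$ in the noisy case. Right-multiplying by $\bm{P}$ and using unitary invariance of the Frobenius norm then yields $\bm{T}(\tilde{\bm{x}}) = \bm{H}(\tilde{\bm{x}})\bm{P} = \bm{H}(\hat{\bm{x}})\bm{P} = \hat{\bm{T}}$ and $\|\bm{T}(\tilde{\bm{x}})-\hat{\bm{T}}\|_F = \|(\bm{H}(\tilde{\bm{x}})-\bm{H}(\hat{\bm{x}}))\bm{P}\|_F = \|\bm{H}(\tilde{\bm{x}})-\bm{H}(\hat{\bm{x}})\|_F \le 2\delta/\epsilon$, which are exactly the two assertions. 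Frankly, there is no substantive obstacle here: all the real work lies in Theorem \ref{thm:main} and its Hankel corollary, and the only point requiring a moment's care is verifying the intertwining identity entrywise (that $[\bm{H}(\bm{x})\bm{P}]_{ij} = x_{N-1+(i-j)}$, so that $\bm{T}$ as defined really produces the Toeplitz matrix built from $\hat{\bm{x}}$) and confirming that the unitary $\bm{P}$ — rather than some non-isometric map — is what relates the two structures, so that norms and rank are genuinely preserved.
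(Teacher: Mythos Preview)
Your proposal is correct and follows essentially the same approach as the paper: the paper's argument is precisely the observation that $\bm{T}(\bm{x})=\bm{H}(\bm{x})\bm{P}$ with $\bm{P}$ unitary, so $\|\bm{T}(\bm{x})\|_*=\|\bm{H}(\bm{x})\|_*$ and the Hankel corollary applies verbatim. You have simply spelled out a few details (rank preservation, Frobenius-norm invariance for part~(b), the entrywise check of the intertwining identity) that the paper leaves implicit.
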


\section{Numerical Experiments}\label{secNum}
In this section, we use numerical experiments to demonstrate our result and its performance improvement, compared with the results in \cite{TBSR:TIT:13,CC:TIT:14}. In the numerical experiments, we use superpositions of complex sinusoids as test signals. Note that the application of our result is not limited to such signals but any signals that are superpositions of complex exponentials.

The true signal $\hat{\bm{x}}$ is generated as follows. We choose $N=64$, i.e., the dimension of $\hat{\bm{x}}$ is $127$. The frequencies $f_k$, $k=1,\ldots,R$, are uniformly randomly drawn from the interval $[0,1]$. The arguments of the coefficients $c_k$, $k=1,\ldots,R$, are from the interval $[0,2\pi]$ uniformly at random, and their amplitudes are generated by $|c_i|=1+10^{0.5m_i}$ where $m_i$ follows the uniform distribution on $[0,1]$. Then, we synthesize the true signal $\hat{\bm{x}}$ by $\hat{x}_t=\sum_{k=1}^{R}c_ke^{\imath 2\pi f_k t}$ for $t=0,1,\ldots,126$.  For each fixed $M$ and $R$, we test $100$ runs. We plot in Fig. \ref{fig:experiments}(a) the rate of successful reconstruction by \eqref{eq:min}, which is solved by alternating direction method of multipliers (ADMM). We see from Fig. \ref{fig:experiments}(a) that the phase transition of our method is very sharp.

For comparison, we plot the phase transitions of off-the-grid CS \cite{TBSR:TIT:13} and EMaC \cite{CC:TIT:14} in Fig. \ref{fig:experiments}(b) and Fig. \ref{fig:experiments}(c) respectively. These figures are from \cite{CC:TIT:14} under the same setting as ours.
We observe that, for the same $R$, our method generally needs smaller $M$ than off-the-grid CS and EMaC to achieve a high successful reconstruction rate. This illustrates that empirically our method requires fewer measurements than both off-the-grid CS and EMaC for the exact reconstruction of complex sinusoid signals. Finally and importantly, our method does not need a separation condition of frequencies to guarantee a successful recovery.

\begin{figure}
\begin{center}
  \subfigure[Our method: Hankel nuclear norm minimization with random Gaussian projections.]%
    {\includegraphics[width=.33\textwidth]{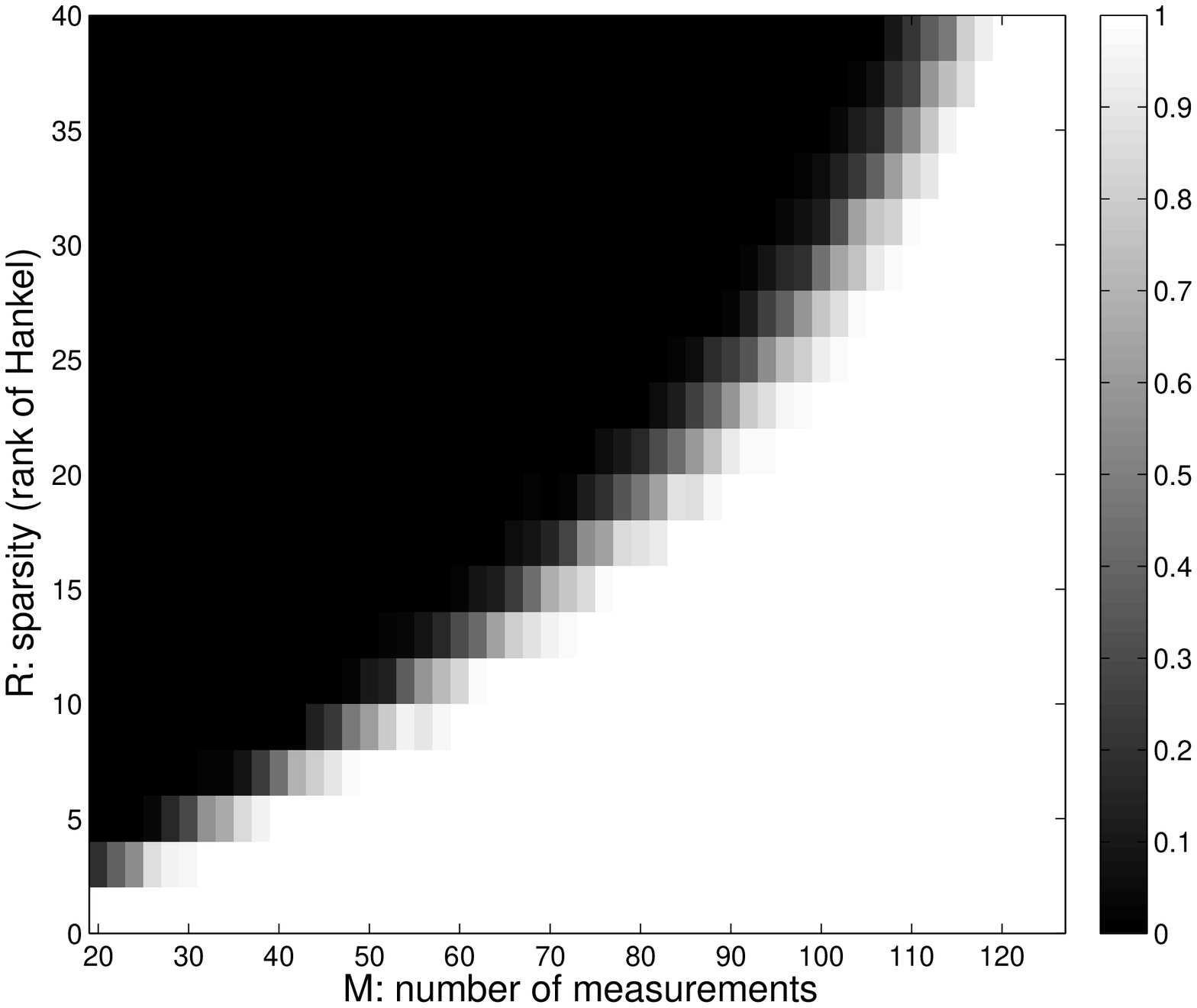}}
  \subfigure[Off-the-grid CS \cite{TBSR:TIT:13}: Atomic norm minimization with non-uniform sampling of entries.]%
    {\includegraphics[width=.31\textwidth]{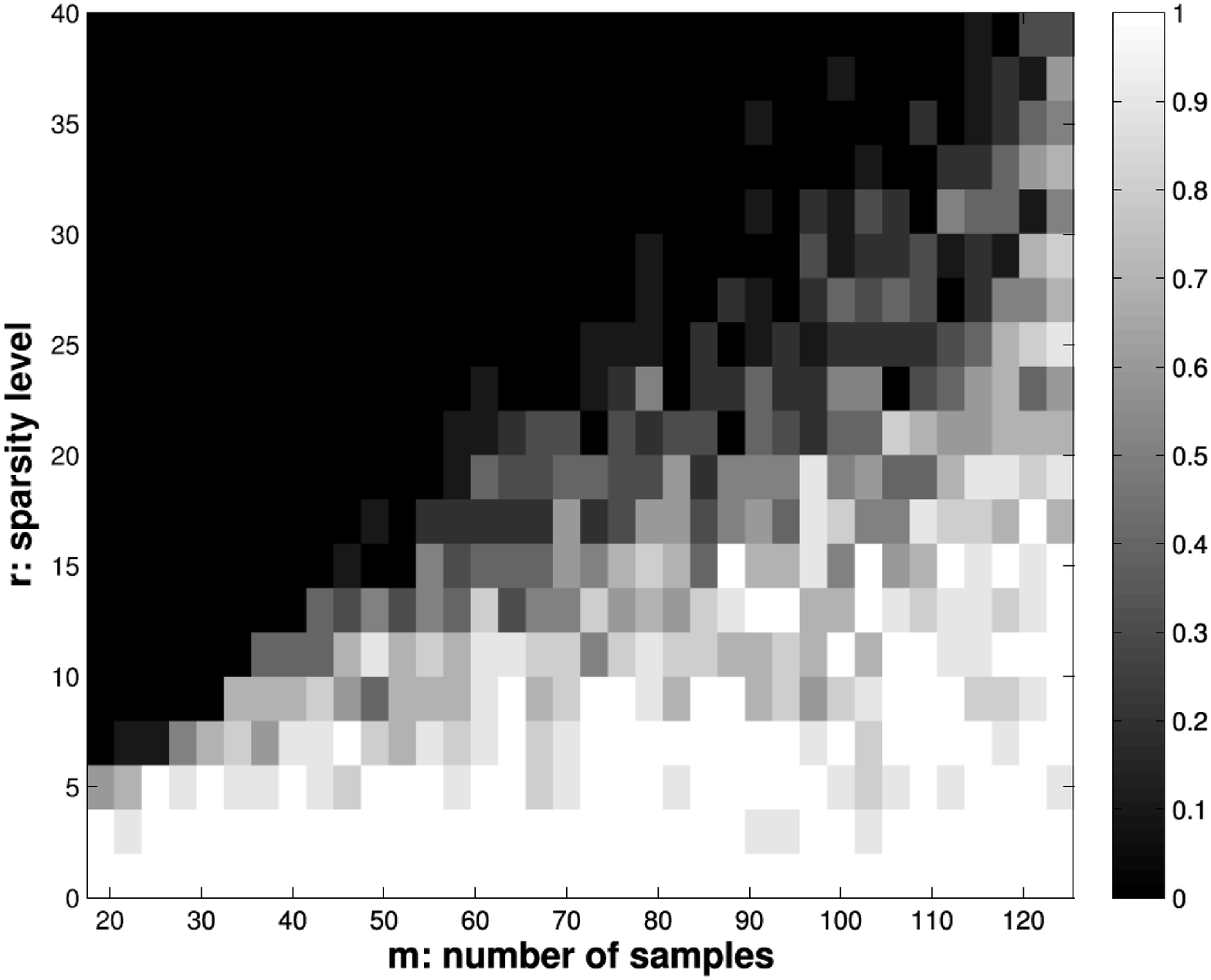}}
  \subfigure[EMaC \cite{CC:TIT:14}: Hankel nuclear norm minimization with non-uniform sampling of entries.]%
    {\includegraphics[width=.31\textwidth]{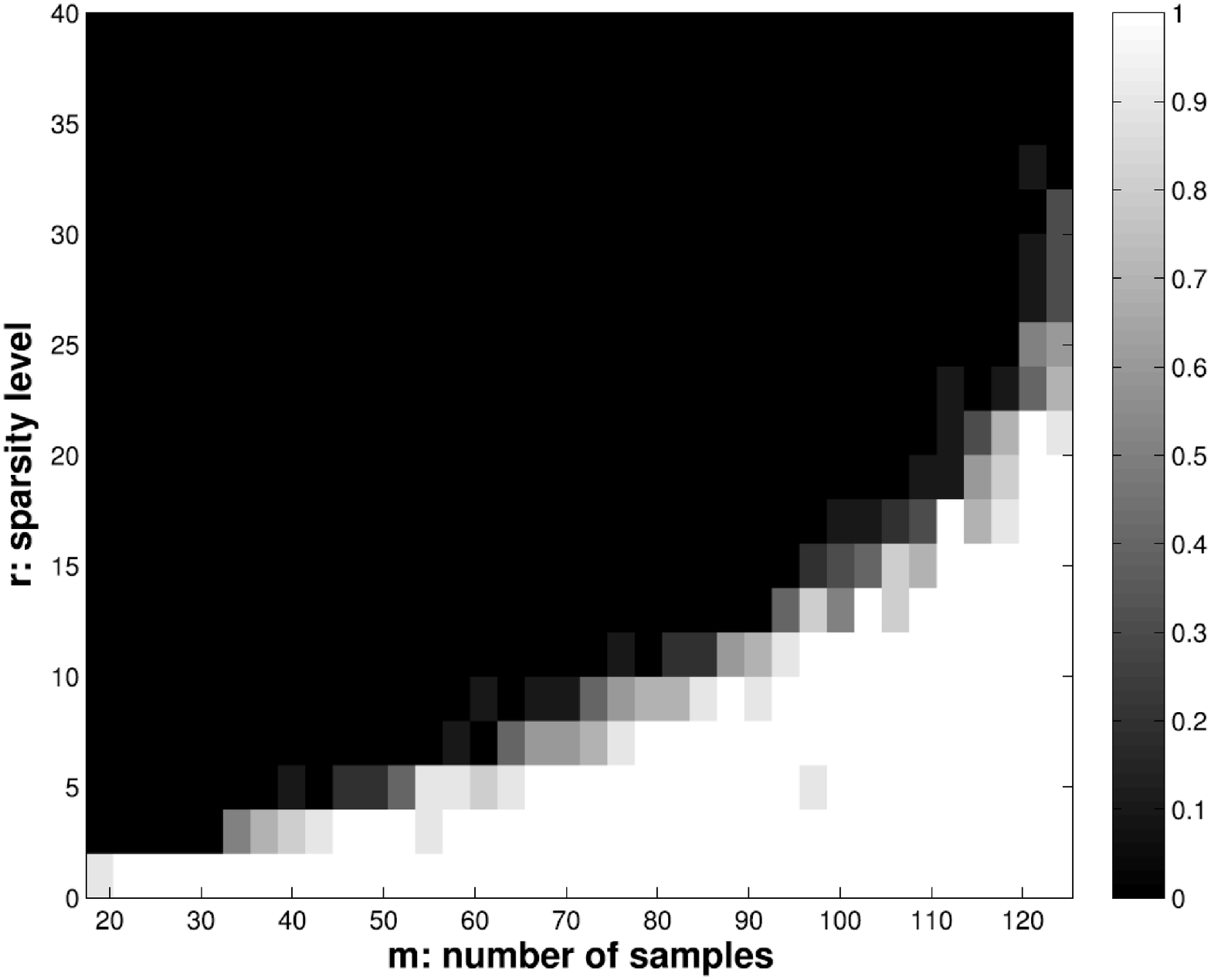}}
\end{center}
\caption{Numerical Results.}\label{fig:experiments}
\end{figure}


\def\cprime{$'$}

\end{document}